\newcommand{\defn}[1]{\textbf{#1}}
\newcommand{\st}[3]{\ensuremath{s(#1,#2,#3)}}
\renewcommand{\paragraph}[1]{\subsubsection*{#1}}
\newtheorem{theorem}            {Theorem}
\newtheorem{lemma}[theorem]     {Lemma}
\newtheorem{definition}[theorem]{Definition}
\newcommand{\filterint}{(\x\cap\q)\setminus v}
\newcommand{\dis}{\mathcal{D}}
\DeclareMathOperator{\Prob}{Pr}
\DeclareMathOperator{\E}{\mathbf{E}}
\newcommand{\x}{\mathbf{x}}
\newcommand{\y}{\mathbf{y}}
\newcommand{\z}{\mathbf{z}}
\newcommand{\q}{\mathbf{q}}
\newcommand{\n}{\mathbf{n}}
\newcommand{\concat}{\circ}
\renewcommand{\hat}{\widehat}
\newcommand{\stx}{\st{\x}{j}{i}}
\newcommand{\sam}[1]{{\color{blue}{\bf \sf \scriptsize Sam:} \sf \scriptsize #1}}
\DeclarePairedDelimiter{\ab}{\lvert}{\rvert}
\title{Set Similarity Search for Skewed Data\footnotetext{
The research leading to these results has received funding from the {European Research Council under the European Union's 7th Framework Programme (FP7/2007-2013)}{} / ERC grant agreement no.~{614331}.
        BARC, Basic Algorithms Research Copenhagen, is supported by the VILLUM Foundation grant 16582.}
      }
      \date{}
\author{Samuel McCauley
\footnote{
    {BARC and IT U.~Copenhagen},
    {Copenhagen},
    {Denmark}.
  \protect\url{{samc,pagh}@itu.dk}
  }
\and
Jesper W. Mikkelsen
\footnote{
    {IT U.~Copenhagen},
    {Copenhagen},
    {Denmark}.
\protect\url{jesperwm@gmail.com}
  }
\and
Rasmus Pagh\footnotemark[2]}
\begin{document}
\maketitle

%!TEX root = main.tex

\begin{abstract}
\emph{Set similarity join}, as well as the corresponding indexing problem \emph{set similarity search}, are fundamental primitives for managing noisy or uncertain data.
For example, these primitives can be used in data cleaning to identify different representations of the same object. 
In many cases one can represent an object as a sparse 0-1 vector, or equivalently as the set of nonzero entries in such a vector.
A set similarity join can then be used to identify those pairs that have an exceptionally large dot product (or intersection, when viewed as sets).
We choose to focus on identifying vectors with large \emph{Pearson correlation}, but results extend to other similarity measures.
In particular, we consider the indexing problem of identifying correlated vectors in a set $S$ of vectors sampled from $\{0,1\}^d$.
Given a query vector $\y$ and a parameter $\alpha\in(0,1)$, we need to search for an $\alpha$-correlated vector $\x$ in a data structure representing the vectors of $S$.
This kind of similarity search has been intensely studied in worst-case (non-random data) settings.

Existing theoretically well-founded methods for set similarity search are often inferior to heuristics that take advantage of \emph{skew} in the data distribution, i.e., widely differing frequencies of 1s across the $d$ dimensions.
The main contribution of this paper is to analyze the set similarity problem under a random data model that reflects the kind of skewed data distributions seen in practice, allowing theoretical results much stronger than what is possible in worst-case settings.
Our indexing data structure is a recursive, data-dependent partitioning of vectors inspired by recent advances in set similarity search.
Previous data-dependent methods 
do not seem to allow us to exploit skew in item frequencies, so we believe that our work sheds further light on the power of data dependence.
\end{abstract}

%!TEX root = main.tex

\section{Introduction}

Data management is increasingly moving from a world of well-ordered, curated data sets to settings where data may be noisy, incomplete, or uncertain.
This requires primitives that are able to work with notions of ``approximate match'', as opposed to the exact matches used in standard hash indexes and in equi-joins.
Such functionality is particularly challenging to scale when data is high-dimensional, informally because of the \emph{curse of dimensionality} which makes it hard to organize data in such a way that approximate matches can be efficiently identified.
In this paper we consider the fundamental primitive \emph{set similarity join} and more specifically the indexing problem \emph{set similarity search}.
A set similarity join is used to identify pairs of sets that are similar in the sense that they have an ``exceptionally large intersection''.
Many notions of ``exceptionally large intersection'' exist, but from a theoretical point of view they are essentially equivalent~\cite{christiani2017set}.
We will consider the encoding of sets as sparse 0-1 vectors, and use Pearson correlation as the measure of similarity; we give more details below.

The information retrieval and database communities have extensively worked on designing scalable algorithms for similarity join, see e.g.~the recent book by Augsten and B{\"o}hlen~\cite{augsten2013similarity}.
The state-of-the-art for practical set similarity join algorithms is reflected in the recent mini-survey and comprehensive empirical evaluation of Mann et al.~\cite{Mann2016}.
The best methods in practice are ones that exploit the significant \emph{skew} in frequencies of set elements that exists in many data sets.
When there is insufficient skew these methods become inefficient, and in the worst case they degenerate to a trivial brute-force algorithm.
In contrast, strong theoretical results are known when \emph{randomization} and \emph{approximation} of distances is allowed, e.g.~\cite{DBLP:conf/pods/AhlePR016,christiani2017set,DBLP:conf/pods/HuTY17,Kapralov15,DBLP:conf/kdd/Zhang017}.
Even though existing randomized algorithms for set similarity search are superior to commonly used heuristics for difficult data distributions with small skew, it is clear that the heuristics will work much better (in theory and in practice) when the skew is large enough.

In this paper we target this disconnect between theory and practice, presenting a new data structure that, in a certain way, can \emph{interpolate} between the best existing methods for small skew and heuristics that work well with large skew.
Our message is that modeling skew can lead to algorithms that takes advantage of structure in data in a way that is theoretically justified.
This complements recent advances in the theory of \emph{data dependent} methods for high-dimensional search, where clustering structure in data (rather than skew) is exploited to achieve faster algorithms, even in the worst case~\cite{andoni2014beyond,andoni2015optimal}.

\paragraph{Motivating example.}
Suppose we wish to search $n$ $d$-dimensional boolean vectors $\x^1,\ldots, \x^n$ chosen from the ``harmonic'' distribution where the $k$th bit is set with probability $\Pr[\x^j_k = 1] = 1/k$, independently for each $\x^j$ and each $k\in\{1,\dots,d\}$.  The boolean vectors represent subsets of $\{1,\ldots, d\}$.  For consistency with the rest of the paper, we refer to these elements as vectors, but use some set notation for simplicity; i.e. $|\x|$ is used to represent the Hamming weight of $\x$.

Assuming (for now) $\log d\gg \log n$, all vectors have Hamming weight close to the expectation $|\x|\approx \sum_{k=1}^d 1/k \approx \ln d$ with high probability by Chernoff bounds.
We wish to search for a vector $\x^{j^*}$ that is correlated with a query vector $\q$ such that $|\x^{j^*}\cap \q|\geq i_1 |\q|$, for some parameter $i_1\in (0,1)$.

Define $i_2 |\q| = \E[|\x^j\cap \q|] = \sum_{k\in \q} 1/k$ to be the expected intersection size between $\q$ and $\x^j$ for $j\ne j^*$, $i_2 < i_1$.
Assuming $i_2 |\q| \gg \log n$ we will have $|\x^{j}\cap \q|\approx i_2 |\q|$ for all $j\ne j^*$ with high probability.
In this setting it is known how to perform the search in expected time roughly $n^{\rho}$, where $\rho = \log(i_1)/\log(i_2)$, and this bound is tight for LSH-like techniques~\cite{christiani2017set}.

However, we can do better for skewed distributions by splitting the search problem into two parts:
split $\q$ into two equal-sized vectors 
$\q^\text{frequent}$ and $\q^\text{rare}$, defined as the first (resp. last) half $d/2$ bits of $\q$.
Note that for every choice of parameter $\ell$, we either have $|x^{j^*}\cap \q^\text{frequent}| \geq \ell |\q|$ or
$|x^{j^*}\cap \q^\text{rare}| \geq (i_1-\ell) |\q|$, so the original search problem can be solved by performing searches for a set with a large overlap either with $\q^\text{frequent}$ or $\q^\text{rare}$.
Let 
$$i_\text{frequent} = \E[|\x^{j}\cap \q^\text{frequent}|]/|\q|,\text{ and}$$
$$i_\text{rare} = \E[|x^{j}\cap \q^\text{rare}|]/|\q|,$$
such that $i_2 = i_\text{frequent} + i_\text{rare}$.
Now the combined cost of the two searches becomes approximately 
$n^{\rho_\text{frequent}}+n^{\rho_\text{rare}}$, where
$$\rho_\text{frequent} = \log(\ell)/\log(i_\text{frequent})\text{, and}$$
$$\rho_\text{rare} = \log(i_1-\ell)/\log(i_\text{rare}) \enspace .$$
Choosing $k$ to balance the two terms we get a faster query time whenever $i_\text{frequent} \gg i_\text{rare}$, i.e., when the distribution of elements in $\q$ has large skew.

The example shows that skew can be exploited, but it remains unclear how to do this in a principled way. This question was the starting point for this paper.

\paragraph{Probabilistic viewpoint.}
Establishing correlation between random variables is a fundamental and well-studied problem in statistics, but computational aspects of this problem are far from settled.
In a breakthrough paper~\cite{Valiant15}, Greg Valiant addressed the so-called \emph{light bulb problem} originally posed in~\cite{valiant1988functionality}: 

\medskip

\emph{Given a set of $n$ vectors $S\subseteq \{0,1\}^d$ chosen uniformly and independently at random, with the exception of one pair of distinct vectors $\x,\y\in S$ that have correlation $\alpha > 0$, identify the vectors $\x$ and $\y$.}

\medskip

If $d$ is sufficiently high (e.g.~$d\gg \log(n)/\alpha$) the correlated vectors $\x$ and $\y$ are, with high probability, the only pair with an inner product of around $(1+\alpha)d/4$, while all other pairs have inner product around $d/4$.
From an algorithmic perspective the problem then becomes that of finding the pair of vectors with a significantly higher inner product.
The problem of searching for correlated vectors has been intensely studied in recent years, in theory~\cite{karppaSODA16,DBLP:conf/esa/KarppaKKC16,alman2015probabilistic,DBLP:conf/focs/AlmanCW16,DBLP:conf/pods/AhlePR016,abboud2017distributed,DBLP:conf/eurocrypt/0001O15} and in practice~\cite{ShrivastavaL14,shrivastava2015asymmetric,DBLP:conf/uai/Shrivastava015,RamG12,DBLP:conf/ijcnn/KeivaniSR17,DBLP:journals/algorithmica/DasguptaS15,DBLP:conf/colt/DasguptaS13,teflioudi2015lemp}.
Practical solutions often address the \emph{search} version, known as \emph{maximum inner product search}, where the vector $\y$ is given as a query (often denoted $\q$) and the task is to search for the correlated vector $\x$ in a data structure enabling fast search among the vectors in $S$.

The light bulb problem is perhaps the cleanest and most fundamental correlation search problem.
However, vectors of real-life data sets are usually not well described by a uniform distribution over $\{0,1\}^d$.
Instead, such vectors are often sparse (assuming without loss of generality that 0 has probability at least $1/2$ in each coordinate), and the fraction of vectors having the value 1 in the $i$th coordinate varies greatly with $i$, often following e.g.\ a Zipfian distribution (see Section~\ref{sec:experiments}).
This kind of \emph{skew} is exploited by practical solutions to correlation search~\cite{Bayardo_WWW07,wang2012can,Mann2016}, since high correlation between vectors $\x$ and $\y$ will often be ``witnessed'' by $\x_i = \y_i = 1$, where the set $\{ \z\in S \; | \; \z_i = 1\}$ is small.
On the other hand, such methods do not perform well when the skew is small.
In this paper we explore the computational problem of identifying correlations in random data with skew, focusing on the search version of the problem.
Generalizing and modifying recent worst-case efficient data structures, we are able to get a smooth trade-off between ``hard'' queries and data sets with no skew, and ``easy'' queries and data sets of the kind often encountered in practice.

To model skewed data we adopt the model of Kirsch et al.~\cite{kirsch2012efficient} that was previously used to give statistical guarantees on data mining algorithms.
We are not aiming at statistical guarantees, but instead use this model as an interesting ``middle ground'' between uniformly random and worst-case settings when analyzing algorithms dealing with high-dimensional data.
Conceptually this model:
\begin{itemize}
\item is expressive enough to model real-world data (such as the feature vectors that are ubiquitous in machine learning) much better than random data, 
\item avoids the pessimism of worst-case analysis, yet 
\item is simple enough to be tractable to analyze.
\end{itemize}

\subsection{Our Results}

We assume that data vectors are sampled from a distribution $\dis$ (see Section~\ref{sec:model} for details).
Let $S$ be the set of $n$ data vectors sampled independently from $\dis$.
We parameterize how close a query is to its nearest neighbor using a parameter $\alpha$.
For $\alpha > 0$ a query $\q$ that is $\alpha$-correlated with some $\x\sim\dis$ can be defined as follows:
Let $\n\sim\dis$ be a ``noise vector'', and independently let
\[
\q_i = \begin{cases}
    x_i & \text{ with probability } \alpha\\
    n_i & \text{ with probability } 1 - \alpha \enspace .
\end{cases}
\]
For a data vector $\x\sim\dis$ we define $p_i = \Pr[\x_i=1]$.
We assume $p_i < 1/2$ for all $i$ and that each bit of $\x$ is sampled independently.  

Our results involve two additional parameters.  
First, we follow previous literature (e.g. ~\cite{christiani2017set, christiani2017framework,DBLP:conf/soda/AndoniLRW17}) in parameterizing our running time by a constant $\rho$.  Generally, $\rho$ would only depend on the similarity of the planted close points ($\alpha$ in this case); for our problem $\rho$ is a function of $\alpha$ and $\dis$ (and the query in the adversarial case).
Secondly, we assume that there is a large constant $C$ satisfying $\sum_{i\in[d]} p_i = C\log n$.  We require $C$ to be large both to ensure correctness\footnote{Specifically, to ensure that our data structure is correct with high probability.} and to achieve our target running time.

\begin{theorem}\label{thm:alpha-correlated}
    Consider a dataset $S$ of $n$ vectors sampled from $\dis$ and let $C$ satisfy $\sum_{i\in[d]} p_i \geq C\log n$.
    %Let $\hat{p_i} = p_i (1-\alpha) + \alpha$, and 

    Assume that $\q$ is $\alpha$-correlated with $\x$ for some $\alpha > 0$.

	Our data structure returns $\x$ on query $\q$ with high probability.   
    Let $\rho$ satisfy 
    \[
        \sum_{i\in[d]} \frac{p_i^{1+\rho}}{p_i (1-\alpha) + \alpha)} = \sum_{i\in[d]} p_i.
    \]
Then for every $\epsilon > 0$ there exists a sufficiently large $C$ such that each query has expected cost $O(dn^{\rho + \epsilon})$, and the data structure requires expected $O(n^{1 + \rho + \epsilon} + dn)$ space.  
\end{theorem}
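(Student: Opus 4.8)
The plan is to analyze a recursive, data-dependent partition of the kind sketched in the motivating example, and then derive $\rho$ as the balance point of a cost optimization. Concretely, I would build a forest of $L$ independent trees. In each tree, every internal node samples a coordinate $i\in[d]$ according to a distribution $w_i$ that I get to tune, and passes to its children only those stored vectors $\z$ with $z_i=1$; a node branches on the coordinates a vector activates, so that each $\z$ descends along every root-to-leaf path consistent with its support. The query $\q$ is routed analogously, descending only into branches labelled by coordinates $i$ with $q_i=1$, and we recurse until the expected number of surviving far vectors at a query-reachable leaf is $O(1)$. Each candidate found at such a leaf is verified in $O(d)$ time, and enough independent trees are used to find the planted $\x$ with high probability.

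The heart of the argument is a per-coordinate collision analysis, which factorizes because bits are independent across coordinates. For a single sampled coordinate, the target $\x$ and the query co-occur with probability $\Pr[x_i=1,q_i=1]=p_i\bigl(p_i(1-\alpha)+\alpha\bigr)$, whereas an unrelated data vector $\z$ co-occurs with the query only with probability $\Pr[z_i=1,q_i=1]=p_i^2$, since $\q$ has marginal $\Pr[q_i=1]=p_i$ and $\z$ is independent. The denominator $p_i(1-\alpha)+\alpha=\Pr[q_i=1\mid x_i=1]$ is precisely the per-coordinate ``advantage'' the planted vector enjoys, which is why it appears in the theorem. I would then write the expected query cost as (number of trees) $\times$ (expected far survivors per query leaf) $\times\,O(d)$, express both factors as sums/products over the sampled coordinates, and minimize over the sampling distribution $w$ subject to the ``one far survivor per leaf'' budget. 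The stationarity (water-filling) condition of this Lagrangian optimization is what forces the implicit equation $\sum_i p_i^{1+\rho}/\bigl(p_i(1-\alpha)+\alpha\bigr)=\sum_i p_i$: the exponent $1+\rho$ records the optimal emphasis placed on rare (small-$p_i$) coordinates, and the right-hand side $\sum_i p_i=C\log n$ is the normalizing ``mass'' of the support.

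For correctness I would argue that, along a uniformly random query-reachable path, $\x$ survives every level it should, so that with constant probability per tree the planted vector reaches a leaf the query visits; taking the number of trees proportional to the reciprocal of this probability, which is $n^{\rho}$ up to lower-order terms, boosts the success probability to high probability. This is where the hypothesis that $C$ is a large constant does its work: Chernoff bounds on the Hamming weights $\lvert\z\rvert$ and on the intersection sizes $\lvert\z\cap\q\rvert$ hold with failure probability $n^{-\Omega(C)}$, which both guarantees that no far vector masquerades as the target and lets me replace the idealized real-valued optimal depth by an achievable integer depth. The slack $\epsilon$ in the exponents $n^{\rho+\epsilon}$ and $n^{1+\rho+\epsilon}$ is exactly what absorbs these rounding and concentration losses, and it can be driven to $0$ by taking $C$ large. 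The space bound follows by counting tree nodes, giving $O(n^{1+\rho+\epsilon})$, and adding the $O(dn)$ cost of storing the vectors themselves.

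The main obstacle I anticipate is controlling the \emph{correlation} between the two events ``$\x$ survives to a query leaf'' and ``few far vectors survive to that same leaf'', since both are measured along the same random path and against the same random query. Decoupling these — ideally by conditioning on the realized support of $\q$ and exploiting independence of the data vectors given $\q$ — and simultaneously showing that the distribution $w$ extracted from the Lagrangian analysis is actually realizable by the sampling rule (rather than only optimal in a relaxed sense) is where I expect the technical difficulty to concentrate.
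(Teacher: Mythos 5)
Your high-level architecture is in fact the paper's: recursive coordinate-sampling filters driven by shared randomness, the conditional probability $\hat{p}_i = p_i(1-\alpha)+\alpha$ as the central per-coordinate quantity, a stopping rule at expected $O(1)$ far survivors per filter, and a large constant $C$ absorbing concentration losses into the $n^{\epsilon}$ slack. But your correctness step has a genuine hole. You assert that "with constant probability per tree the planted vector reaches a leaf the query visits," while in the same sentence calling the reciprocal of that probability $n^{\rho}$ -- and neither version can be justified by the first-moment reasoning you describe. Precisely because the sampling rates are tuned so that the expected number of common extensions per node is about $1$, the common paths of $\x$ and $\q$ form a near-critical branching process, and expectation arguments give no lower bound on its survival probability over $\Theta(\log n)$ levels. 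The paper deals with this in two steps: it inflates every sampling threshold by a factor $1+\delta$ with $\delta = 3/\sqrt{\alpha C}$ so that, by Chernoff and a union bound, the branching condition $\sum_{i\in(\x\cap\q)\setminus v}\min\{s(\x,j,i),s(\q,j,i)\}\geq 1$ holds at every node with high probability (Lemma~\ref{lem:correlatedcorrectness}); and it then runs a second-moment argument that crucially exploits the pairwise independence of the shared hash functions to conclude $\Pr[F(\x)\cap F(\q)\neq\emptyset]\geq 1/\log n$ (Lemma~\ref{lem:cor}), boosted afterwards by independent repetitions. Your closing worry -- decoupling "$\x$ survives" from "few far vectors survive at that leaf" -- is not where the difficulty lies (the far-vector cost is only ever bounded in expectation, so no decoupling is needed); what is missing is any lower bound on the collision probability of the planted pair.

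The performance side has a second gap: you never exhibit the sampling distribution, deferring it to a Lagrangian/water-filling optimization and asserting that stationarity "forces" the implicit equation for $\rho$. The paper's logic runs in the opposite direction. It writes the rule down explicitly, $s(\x,j,i)=(1+\delta)/(\hat{p}_i C\log n - j)$, and the equation $\sum_{i} p_i^{1+\rho}/\hat{p}_i=\sum_i p_i$ is exactly the normalization under which $\sum_{i\in\x}p_i^{\rho}\, s(\x,j,i)\leq 1+o_C(1)$ holds with high probability for $\x\sim\dis$ (Lemma~\ref{lem:correlatedspeed}). Feeding this into the inductive path-counting lemma (Lemma~\ref{lem:per}: if $\sum_{i\in\x}p_i^{\rho}\, s(\x,j,i)\leq c$ then $\E[\ab{F(\x)}]=O(n^{\rho}c^{\log n})$, proved by induction over path length and the weight $t=\sum_{i\in v}\log(1/p_i)$) yields $\E[\ab{F(\x)}]=O(n^{\rho+\epsilon})$, and the stopping rule $\prod_k p_{i_k}\leq 1/n$ converts this into the query-time and space bounds. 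Your "count tree nodes" step is exactly this nontrivial induction; without it, and without the explicit rule it applies to, the bounds $O(dn^{\rho+\epsilon})$ and $O(n^{1+\rho+\epsilon})$ do not follow. You do compute the right ingredients -- the co-occurrence probabilities $p_i\hat{p}_i$ for the planted pair versus $p_i^2$ for far pairs -- but the proposal never combines them: the key cancellation is that sampling proportionally to $1/\hat{p}_i$ makes the expected common-branching factor $(1+\delta)\sum_i p_i/(C\log n)=1+\delta$ independent of the skew, and that identity appears nowhere in your argument.
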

\paragraph{Discussion.}
In the balanced case where all probabilities $p_i$ are identical, we recover the time bounds of the recently proposed {\sc ChosenPath} algorithm~\cite{christiani2017set}, which are known to be optimal in this setting.
In the very unbalanced case where some $p_i$ are $\Omega(1)$, some $p_i$ are $O(1/n)$, and the expected number of items of both kinds are comparable, we match the well-known \emph{prefix filter} algorithm~\cite{Bayardo_WWW07}, which beats {\sc ChosenPath} in this setting.
For skew between these extremes we get strict improvements over existing methods.
(See Section~\ref{sec:examples} and Figure~\ref{figure} for further discussion.)
%
%\begin{figure}
%    \floatbox[{\capbeside\thisfloatsetup{capbesideposition={left,top},capbesidewidth=4cm}}]{figure}[\FBwidth]
%\begin{SCfigure}
\begin{wrapfigure}{r}{.5\textwidth}
\centering
  \vspace{.18in}
\includegraphics[width=0.40\textwidth]{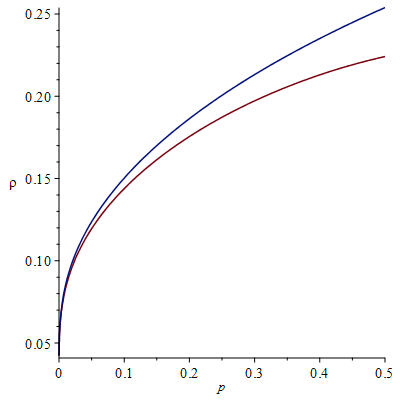}
      \caption{The red line gives the $\rho$ value of our data structure when the distribution is such that half the bits are set to $1$ with probability $p$ and the other half is set to $1$ with probability $p/8$, and the sought-for correlation is $\alpha=2/3$. The blue line gives the $\rho$-value achieved by Chosen Path. Prefix filtering has a $\rho$-value of $1$ in this case and therefore not included in the figure. We see that even though Chosen Path achieves the optimal $\rho$-value for solving the $(b_1,b_2)$-approximate similarity problem when considering worst-case inputs, we can achieve a better $\rho$-value when the input distribution is skewed.} 
      \label{figure}
      \vspace{-.5in}
  \end{wrapfigure}~
%\end{SCfigure}
%\end{figure}
\vspace*{-\baselineskip}
\paragraph{Techniques.}
Our data structure is a natural, recursive variant of {\sc ChosenPath} that is able to exploit skew by varying the recursion depth over the branches of the recursion tree and by aggressively favoring choices based on the given distribution that are more likely distinguish close and far elements. 
We stress that {\sc ChosenPath} is not able to exploit skew, and in fact has the same tight running time guarantee independent of the data distribution.
Because we cut the depth of the tree earlier based on the skew of the distribution, we must tighten the previous analysis to handle sampling without replacement, while also parameterizing our performance based on skew.  This leads to significant challenges.

%\medskip

\paragraph{Adversarial queries.}
It may not always be reasonable to assume that a query is random and $\alpha$-correlated, as in Theorem~\ref{thm:alpha-correlated}.
For this reason we analyze the setting where the query may be adversarially chosen.
In Section~\ref{sec:datastructure} we give a data structure that adapts to the difficulty of the query, matching existing worst-case bounds for ``hard'' queries, while being much faster for ``easy'' queries.
Our result uses a similarity function $B(\x,\q)$ (defined later) to parameterize how close the query is to the nearest element of the data set.

\begin{theorem}\label{thm:adversarial}
    Consider a dataset $S$ of $n$ vectors sampled from $\dis$ and let $C$ satisfy $\sum_{i\in[d]} p_i \geq C\log n$.
    Let $\rho_u$ satisfy
    \[
        \sum_{i\in [d]} p_i^{1 + \rho_u} = b_1\sum_{i\in [d]} p_i.
    \]
    For any query $\q$ of size $\ab{\q} \geq C\log n$ satisfying $B(\q,\x)\geq b_1$ for some $\x\in S$, let $\rho(q)$ satisfy 
\[
    \sum_{i\in \q} p_i^{\rho(q)} = b_1\sum_{i\in \q} p_i.
\]
    Then for any $\varepsilon > 0$ there exists a sufficiently large $C$ such that 
    our data structure requires $O(n^{1+\rho_u + \varepsilon} + dn)$ expected space, can be built in $O(dn^{1 + \rho_u + \varepsilon})$ time, and can perform a search in time $O(dn^{\rho(q) + \varepsilon})$, returning $\x$ with probability\footnote{This probability can be increased using independent repetitions.} at least $1/2$.
\end{theorem}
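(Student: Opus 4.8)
The plan is to view the recursive {\sc ChosenPath} structure as a branching process and to read the two exponents off the fixed point at which this process is tuned. First I would fix the construction precisely: it is a (small, possibly $O(\log n)$) collection of trees where, at each internal node, we independently retain for every coordinate $i$ not yet used on the current root-to-node path a child labelled $i$ with a distribution-dependent probability $\mu_i$, and a vector $\z$ descends into child $i$ exactly when $z_i=1$. A branch is terminated into a leaf as soon as the expected number of surviving data vectors at the node, which along a path $(i_1,\dots,i_s)$ equals $n\prod_{j}p_{i_j}$, falls below a threshold $T=\Theta(\log n)$. This termination rule is exactly what makes the depth vary across branches and is where skew is exploited: a branch sampling rare coordinates reaches a leaf after few levels, while a branch along frequent coordinates recurses much deeper. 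The first checkpoint is to verify that the branching probabilities forced by the construction make the build-time recursion grow at the rate encoded by $\rho_u$.

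Second I would establish correctness. The key object is the induced process on the \emph{common} children of the close pair $(\q,\x)$: at a node reached by both, they descend together into child $i$ iff $i\in\x\cap\q$ and the child exists, so the expected number of common children per level is $\sum_{i\in\x\cap\q}\mu_i$. The hypothesis $B(\q,\x)\ge b_1$ is precisely the condition that keeps this common-children process supercritical, so that with at least constant probability some common branch survives all the way to a leaf, at which both $\x$ and $\q$ are stored; a logarithmic number of independent trees then pushes the probability of finding $\x$ above $1/2$, and the footnoted repetitions raise it further. The requirement that $C$ be large enters here: with $\sum_i p_i\ge C\log n$ and $\ab{\q}\ge C\log n$, Chernoff bounds make the surviving count $n\prod_j p_{i_j}$ concentrate at every node, so the count-based termination behaves as if deterministic and the intersection statistics of $\x$ with $\q$ hold with high probability.

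Third comes the cost accounting, which is where the \emph{two different} exponents appear. Both follow from the linear recursion $F(N)=\sum_i \mu_i\,F(N p_i)$ with boundary $F(N)=N$ for $N\le T$, whose solution is $F(N)\propto N^{a}$ with $a$ the root of the associated fixed-point equation. For space and build time I sum, over all trees and all leaves, the number of stored vectors; since build branches over \emph{all} of $[d]$, the recursion evaluates to $F(n)=n^{1+\rho_u+o(1)}$, exactly reproducing the implicit definition of $\rho_u$, and the extra factor $d$ for reading coordinates gives build time $O(dn^{1+\rho_u+\varepsilon})$ and space $O(n^{1+\rho_u+\varepsilon}+dn)$. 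For the query, the crucial observation is that a search only ever follows children labelled by coordinates in the support $\q$; re-solving the same fixed point with the sum restricted to $i\in\q$ yields the strictly query-dependent exponent $\rho(q)$ and hence expected search time $O(dn^{\rho(q)+\varepsilon})$. The asymmetry between the two defining equations is thus the asymmetry between ``build sees all of $[d]$'' and ``a query branches only on its own support $\q$''.

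The main obstacle, flagged already in the techniques overview, is making these branching-process heuristics rigorous under \emph{sampling without replacement} together with \emph{variable depth}. Because each coordinate may label at most one child per path, the per-level events are dependent and the clean product estimates behind $F(N)\propto N^a$ hold only up to correction factors; I would bound these corrections by $n^{o(1)}$, absorbing them into $\varepsilon$, using that paths have length $O(\log n)$ while the coordinate pool has total mass $C\log n$ with $C$ large. The second delicate point is passing from the expected-value statements (expected storage, expected number of common leaves) to the claimed high-probability and constant-probability guarantees, which requires a second-moment or martingale argument showing that the supercritical common-children process does not die out with more than constant probability and that realized leaf depths do not stray from their concentrated values. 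I expect the simultaneous bookkeeping of variable depth, without-replacement dependence, and the skew-dependent branching to be the most technically demanding part.
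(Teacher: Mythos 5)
Your proposal follows essentially the same route as the paper's proof: the same recursive path-sampling structure with the population-based stopping rule ($n\prod_j p_{i_j}$ below a threshold), correctness via a second-moment argument on the common-path process (the paper's Lemma~\ref{lem:cor}, where $B(\q,\x)\geq b_1$ forces expected offspring at least $1$ at every level), and cost bounds from the fixed-point/inductive exponent computation --- restricted to the support of $\q$ for the query time and averaged over $\dis$ for preprocessing and space (the paper's Lemma~\ref{lem:per}, the query-time lemma of Section~\ref{sec:adversarial}, and Lemma~\ref{lem:preprocess}) --- with the sampling-without-replacement and variable-depth corrections absorbed into $n^{\varepsilon}$ via large $C$, exactly as you suggest. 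The only imprecision is that the common-path process is in general only critical (mean offspring exactly $1$), so per-repetition survival is $\Theta(1/\log n)$ rather than constant, but your planned second-moment bound and $O(\log n)$ independent repetitions cover this, which is precisely how the paper proceeds.
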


\paragraph{Similarity joins.}
Our results immediately apply to the problem of database similarity joins~\cite{PaghPSS15,Arasu_VLDB06,li2011pass,Xiao_WWW08,augsten2013similarity,Mann2016,DBLP:conf/pods/AhlePR016,DBLP:conf/pods/HuTY17}.  

Many similarity join algorithms work using (essentially) repeated similarity search queries; see e.g.~\cite{Mann2016,PaghPSS15,jiang2013efficient}.  This method is equally effective here.  Assume that we want to find all similar pairs between sets $R$ and $S$, and that the actual join size (i.e. the number of close pairs) is much smaller than $R$ or $S$.\footnote{If the join is large we can extend this idea via parameterizing by the join size; see e.g. the time bounds in~\cite{PaghPSS15,DBLP:conf/pods/HuTY17}.}  Then Theorem~\ref{thm:adversarial} implies that we can preprocess $S$ in time $O(d|S|^{1 + \rho})$ and find all pairs in time $O(d|R||S|^{\rho})$.  The same idea extends to Theorem~\ref{thm:alpha-correlated} as well.

\subsection{Related work}

Assume in the following that $d$ is large enough that the empirical correlation closely matches the true correlation.

\paragraph{Correlation search on the unit sphere.}
After shifting vectors to have zero mean and normalizing them (which does not affect correlation), searching for a vector $\x\in S$ that is $\alpha$-correlated with a query vector $\y$ boils down to a search problem on the unit sphere $S^{d-1}$:
Given $\y\in S^{d-1}$ find $\x\in S$ such that $\x\cdot \y \geq \alpha$.
In turn, this is equivalent to near neighbor search under Euclidean distance on the unit sphere, which has been studied extensively.
For ``balanced'' data structures with space (and construction time) $\tilde{O}(n^{1+\rho})$ and query time $\tilde{O}(n^{\rho})$ the state-of-the-art is $\rho=\tfrac{1-\alpha}{1+\alpha}$~\cite{andoni2015practical}, using a non-trivial generalization of Charikar's famous LSH for angular distances~\cite{Cha02}.
Generalizations to various time-space trade-offs are also known~\cite{christiani2017framework,DBLP:conf/soda/AndoniLRW17}.

\paragraph{Correlation search on sparse vectors.}
For sparse binary vectors, the framework of \emph{set similarity search} captures search for different notions of similarity/correlation~\cite{choi2010survey}.
For vectors of fixed Hamming weight there is a 1-1 correspondence between Pearson correlation and standard set similarity measures such as Jaccard similarity and Braun-Blanquet similarity.
It is known that methods such as MinHash~\cite{bro97a,bro97b} that specifically target sparse vectors yield better search algorithms than more general methods when the fraction $\beta$ of non-zero entries is bounded by a sufficiently small constant.
Recently it was shown that MinHash can be improved in this setting, yielding balanced similarity search data structure with $\rho=\log(\beta+\alpha(1-\beta))/\log \beta$~\cite{christiani2017set}.

\paragraph{Batched search.}
The first subquadratic algorithm for the light bulb problem was given by Paturi et al.~\cite{DBLP:conf/colt/PaturiRR89}, yielding time $n^{2-\Theta(\alpha)}$.
Valiant~\cite{Valiant15} showed that it is possible to remove $\alpha$ from the exponent,  obtaining substantially subquadratic running time even for small values of $\alpha$.
Karppa et al.\ strenthened Valiant's result to time $O(n^{1.6} \text{poly}(1/\alpha))$~\cite{karppaSODA16}, for $\log n \ll d = n^{o(1)}$ (where the constant 1.6 reflects current matrix multiplication algorithms).
A somewhat slower, but deterministic, subquadratic algorithm was subsequently presented~\cite{DBLP:conf/esa/KarppaKKC16}.
With the exception of~\cite{DBLP:conf/colt/PaturiRR89} these methods rely on fast rectangular matrix multiplication and seem inherently only applicable to batched search problems.

\paragraph{Worst-Case search.}  Much of previous theoretical work on set similarity search has focused on the worst case, where the point set and queries are adversarial rather than being drawn from a known distribution.  The Chosen Path algorithm of Christiani and Pagh 
focuses on Braun-Blanquet similarity, which is the metric we use here.\footnote{We show that Pearson correlated vectors are very likely to have high Braun-Blanquet similarity in Lemma~\ref{lem:correlatedclose}.}  In particular, if the dataset contains a point $p$ where $q$ and $p$ have Braun-Blanquet similarity at least $b_1$, the data structure returns a point $p'$ such that $q$ and $p$ have Braun-Blanquet similarity at least $b_2$.   Their data structure 
achieves space $\tilde{O}(n^{1 + \rho})$ and 
query time $\tilde{O}(n^{\rho})$ for $\rho = (\log b_1)/\log b_2$.  This result is a strict improvement over the classic MinHash algorithm~\cite{bro97a,bro97b}.  

\paragraph{Lower bounds.}
Cell probe data structure lower bounds achievable by current techniques are polylogarithmic in data size (assuming polylogarithmic word size).
Tight bounds are only known for small query time, see e.g.~\cite{panigrahy2010lower} --- for subquadratic space this leaves a large gap to the polynomial upper bounds that can be achieved by similarity search techniques.
However, good \emph{conditional} lower bounds based on the strong exponential time hypothesis (SETH)~\cite{Impagliazzo_JCSS01} are known.
Ahle et al.~\cite{DBLP:conf/pods/AhlePR016} showed hardness of approximate maximum inner product search under SETH, but only for inner products very close to zero.
Recently, Abboud, Rubinstein, and Williams~\cite{abboud2017distributed} significantly improved this result, and in particular showed, again assuming SETH, that maximum inner product requires near-linear time even for vectors in $\{0,1\}^d$ and allowing a large $2^{(\log d)^{1-o(1)}}$ approximation factor.
This means that there is little hope of obtaining strong algorithms in the worst case, even for batched search problems.
Of course, as the upper bounds for the light bulb problem show, the average case is easier.

\paragraph{Heuristics.}
Above we have discussed related work with a \emph{theoretical} emphasis.
Many heuristics exist, especially for sparse vectors.
Some of the most widely used ones are based on exact, deterministic filtering techniques, such as prefix filtering~\cite{Bayardo_WWW07} which evaluates the similarity of every pair of vectors that share a $1$ in a position that has a small fraction of $1$s.
This is effective when vectors are sparse and \emph{skewed} in the sense that there are many positions with a small fraction of $1$s.
We refer to Mann et al.~\cite{Mann2016} for an overview of exact, heuristic techniques.

\section{Model}\label{sec:model}

Our model closely follows the one in~\cite{kirsch2012efficient} --- we describe it here for completeness.
The elements in our set are $d$-dimensional boolean vectors $\x\in\{0,1\}^d$.  These vectors represent a subset of items from a universe $U = \{1,\ldots,d\}$.  We generally consider our elements to be boolean vectors; however, we express some of our operations using set notation for convenience (for example, $\x \cap\q$ represents the 1 bits in common between $\x$ and $\q$).

We are given a distribution $\dis=\dis[p_1,\ldots , p_d]$ over $\{0,1\}^d$ defined 
as follows: if $\x$ is a vector drawn from $\dis$, then $\Prob[\x_i = 1] = p_i$ independently for each $i\in [d]$. We denote by $\dis^n=\dis^n[p_1,\ldots , p_n]$ the distribution obtained by sampling $n$ vectors independently from $\dis$. The probabilities $p_1,\ldots , p_d$ are assumed to be known to the algorithm. We assume that all item-level probabilities are at most $1/2$. The particular value $1/2$ is not important, and all of our results holds as long as there is some constant $M<1$ such that all item-level probabilities are bounded by $M$.

\begin{definition}[Correlation]
Let $\alpha\in[0,1]$.
    Fix $\x\in \{0,1\}^d$. We say that $\q$ is \emph{$\alpha$-correlated to $\x$ with respect to $\dis$}, written $\q\sim \dis_\alpha(\x)$, if $\q$ is a random vector drawn as follows: For each $i\in[d]$ independently, with probability $\alpha$ let $\q_i=\x_i$, and with probability $1-\alpha$ let $\q_i\sim \text{Bernoulli}(p_i)$.  
\end{definition}
If $\x\sim \dis$ and $\q\sim\dis_\alpha(\x)$, then the distribution of $\q$ is exactly $\dis$---in particular, $\Pr(\q_i=\nobreak 1)=\nobreak p_i$. Furthermore, for each
$i\in [d]$, the random variables $\q_i$ and $\x_i$ have Pearson correlation $\alpha$.

Our goal is to create an efficient data structure for vectors sampled from $\dis^n$ which takes 
advantage of possible skew in the item-level probabilities. Our performance bounds (query and 
preprocessing times) are expected, and 
depend both on $\dis^n$ and the data structure's random choices. Following~\cite{christiani2017set}, we will use Braun-Blanquet similarity,
\[
    B(\x,\q) = \frac{\ab{\x\cap\q}}{\max\{\ab{\x},\ab{\q}\}},
\]
as our measure for similarity between sets. This similarity measure is closely related to Jaccard similarity, see \cite{christiani2017set} for details. We now formally define the two versions of the problem that we are considering:

\paragraph{Adversarial query} Given a dataset $S\sim \dis^n[p_1,\ldots, p_d]$ of $n$ items sampled from $\dis$ and a similarity threshold 
$b_1$, preprocess $S$ into a data structure with the following capability: 
Given $\q\in \{0,1\}^d$, return $\x\in S$ such that $B(\x,\q)\geq b_1$ if such an $\x$ exists. The data 
structure must succeed with probability at least $1-o_n(1)$ over its own internal randomness. In particular, 
the probability that the data structure succeeds must be independent of the dataset $S\in\dis^n$.

\paragraph{Correlated query} Given a dataset $S\sim\dis^n[p_1,\ldots, p_d]$ of $n$ items sampled from $\dis$ and a correlation threshold 
$0<\alpha\leq 1$, preprocess $S$ into a data structure with the following capability: Let $\x\in S$ and let $\q\sim\dis_\alpha(\x)$ be $\alpha$-correlated to $\x$ with respect to $\dis$. Then, given the query $\q$, the data structure must return $\x$. The data structure must succeed in doing so with probability at least $1-o_n(1)$ over the choice of $\q$, the randomness of the dataset $S$, and its own internal randomness.\footnote{Clearly, some assumptions on the item-level probabilities of $\dis^n$ are needed for this to be possible. For all of our results, we will assume that $\sum_{i\in[d]}p_i$ is sufficiently large.}

As usual, the part of the success probability that depends only on the data structure's random choices 
may be boosted by a small number of repetitions. Thus, for adversarial queries, it suffices to build a data 
structure with success probability, say, $1/2$, and for correlated queries, it suffices that the part of the 
success probability depending only on the data structure's random choices is at least $1/2$. 
However, the part of the 
success probability depending on the query and the dataset cannot be boosted by 
repetitions. Instead, we need to design a data structure such that under some reasonable assumptions on the item-level probabilities, we achieve the desired success probability. 

We say that an event occurs \defn{with high probability} if it occurs with probability $O(1/n^c)$ for a tunable\footnote{We can make $c$ arbitrarily small by adjusting other parameters.  For example, in the above discussion, we can boost the probability of success from $1/2$ to $O(1/n^c)$ using $\Theta(c\log n)$ independent repetitions.} constant $c$.

We make use of the following weighted Chernoff bound found in e.g.~\cite[Ex. 4.14]{DBLP:books/daglib/0012859}.
\begin{lemma}
\label{wchernoff}
Let $X_1,\ldots , X_n$ be independent random variables. Let $a_1,\ldots , a_n\in [0,1]$, $p_1,\ldots , p_n\in [0,1]$, and assume that $\Pr [X_i=a_i]=p_i$, $\Pr [X_i=0]=1-p_i$ for $i\in [n]$. Furthermore, assume that $a_i\leq a$ for all $i\in [n]$, and let $S_n=\sum_{i=1}^nX_i$. Then,
\begin{align*}
    &\Pr\left[S_n \geq (1+\varepsilon)\E\left[S_n\right]\right]\leq \exp\left(-\frac{\varepsilon^2\E[S_n]}{3a}\right),\text{ and}\\
&\Pr\left[S_n\leq (1-\varepsilon)\E\left[S_n\right]\right]\leq \exp\left(-\frac{\varepsilon^2\E[S_n]}{2a}\right).
\end{align*}
\end{lemma}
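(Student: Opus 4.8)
The plan is to run the standard exponential-moment (Chernoff--Bernstein) argument, taking care that the weights $a_i$ and their common upper bound $a$ enter correctly so as to reproduce the constants $3$ and $2$. Write $\mu = \E[S_n] = \sum_i p_i a_i$. For the upper tail I would fix $t>0$ and apply Markov's inequality to $e^{tS_n}$, giving $\Pr[S_n \ge (1+\varepsilon)\mu] \le e^{-t(1+\varepsilon)\mu}\,\E[e^{tS_n}]$. By independence $\E[e^{tS_n}] = \prod_i \bigl(1 + p_i(e^{ta_i}-1)\bigr)$, and the elementary inequality $1+x \le e^x$ turns this into $\exp\bigl(\sum_i p_i(e^{ta_i}-1)\bigr)$.

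The step that makes the weights collapse is a convexity bound. Since $x\mapsto e^{tx}$ is convex and $0 \le a_i \le a$, the chord inequality gives $e^{ta_i}-1 \le \tfrac{a_i}{a}(e^{ta}-1)$, so that $\sum_i p_i(e^{ta_i}-1) \le \tfrac{e^{ta}-1}{a}\sum_i p_i a_i = \tfrac{e^{ta}-1}{a}\,\mu$. Substituting $u = ta$ yields the bound $\exp\bigl(\tfrac{\mu}{a}[(e^u-1) - u(1+\varepsilon)]\bigr)$, and optimizing over $u$ (take $u = \ln(1+\varepsilon)$) produces the classical form $\bigl(e^\varepsilon/(1+\varepsilon)^{1+\varepsilon}\bigr)^{\mu/a}$. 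Finally I would invoke the standard calculus fact that $\varepsilon - (1+\varepsilon)\ln(1+\varepsilon) \le -\varepsilon^2/3$ for $\varepsilon\in(0,1]$ to conclude $\Pr[S_n \ge (1+\varepsilon)\mu] \le \exp(-\varepsilon^2\mu/(3a))$.

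For the lower tail I would instead apply Markov to $e^{-tS_n}$ with $t>0$, obtaining $\Pr[S_n \le (1-\varepsilon)\mu] \le e^{t(1-\varepsilon)\mu}\prod_i(1+p_i(e^{-ta_i}-1)) \le \exp\bigl(t(1-\varepsilon)\mu + \sum_i p_i(e^{-ta_i}-1)\bigr)$. Here the clean route is the quadratic bound $e^{-y} \le 1 - y + y^2/2$ for $y\ge 0$ applied with $y=ta_i$, together with $a_i^2 \le a\,a_i$, which gives $\sum_i p_i(e^{-ta_i}-1) \le -t\mu + \tfrac{t^2 a}{2}\mu$. The exponent then becomes $-t\varepsilon\mu + \tfrac{t^2a}{2}\mu$, minimized at $t=\varepsilon/a$, yielding exactly $\exp(-\varepsilon^2\mu/(2a))$.

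I expect no deep obstacle, since this is a textbook-style bound; the only real care lies in the bookkeeping of the weights. The one place where a careless choice of auxiliary inequality would spoil the stated constant is the upper tail: one must use the chord/convexity bound (rather than a quadratic bound) to keep $\sum_i p_i a_i = \mu$ intact and to land on the $e^\varepsilon/(1+\varepsilon)^{1+\varepsilon}$ form, whose $\varepsilon^2/3$ estimate is valid on $\varepsilon\in(0,1]$. Equivalently, one could first rescale $Y_i := X_i/a \in [0,1]$ to reduce to the unit-weight Chernoff bound for $T := \sum_i Y_i$ and then substitute $S_n = aT$, $\E[T]=\mu/a$; this makes the role of $a$ fully transparent and is the framing I would ultimately present.
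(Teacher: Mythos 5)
Your proof is correct in both tails, but note that there is no internal proof in the paper to compare against: the lemma is stated as a known weighted Chernoff bound and cited to \cite[Ex.~4.14]{DBLP:books/daglib/0012859}. Your argument is precisely the standard derivation underlying that citation: Markov's inequality applied to $e^{\pm tS_n}$, factorization by independence, $1+x\le e^x$, and then the one step where the weights genuinely matter, handled correctly in each tail --- the chord/convexity bound $e^{ta_i}-1\le \tfrac{a_i}{a}(e^{ta}-1)$ for the upper tail (which keeps $\sum_i p_i a_i=\E[S_n]$ intact and leads to the $e^{\varepsilon}/(1+\varepsilon)^{1+\varepsilon}$ form), and the quadratic bound $e^{-y}\le 1-y+y^2/2$ together with $a_i^2\le a\,a_i$ for the lower tail. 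The optimizations $u=\ln(1+\varepsilon)$ and $t=\varepsilon/a$ are right and give exactly the constants $3$ and $2$. Two minor remarks: first, the estimate $\varepsilon-(1+\varepsilon)\ln(1+\varepsilon)\le -\varepsilon^2/3$ holds only for $\varepsilon\in(0,1]$, so the upper-tail bound with constant $3$ carries that restriction; you flag this correctly, and the paper's statement (like the cited exercise) silently assumes it. Second, the rescaling $Y_i:=X_i/a$ you propose as an alternative presentation normalizes the bound $a$ to $1$ but reduces to the Chernoff bound for $[0,1]$-valued (not Bernoulli) summands, so the convexity step is still needed there; it is a framing device rather than a genuine shortcut.
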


%We use (weighted) Chernoff bounds extensively in our proofs.  These are stated formally in Lemma~\ref{wchernoff} (in Appendix~\ref{sec:tools}) for completeness.

%\sam{We should cite the theorem that uses this---I believe it will be the correctness theorem for the correlated case}

%We assume that $d = \text{polylog}(n)$.  Normally this assumption is justified by an initial Johnson-
%Lindenstrauss preprocessing step (if necessary).  However, it is not clear how J-L transform affects our 
%knowledge of $\mathcal{D}$.  

%For now we live in Braun-Blanquet similarity space.  We assume that each bit $j$ is set independently 
%with probability $p_j$.  We further assume (for now) that all queries as well as all $x\in S$ have the 
%same 
%number of set bits.  As pointed out in the original Chosen Path paper, this can be relaxed using brute 
%force with an additional $O(d^2)$ cost.

%[Write about close point: planted adversarially within distance $b_1$; the adversarial aspect is because 
%we want to be able to handle worst-case queries, not just queries that are promised to have some 
%relationship to $\mathcal{D}$.]

\section{Data Structure}
\label{sec:datastructure}

The data structure follows the locality-sensitive \emph{mapping}, or \emph{filtering}, framework~\cite{becker2016new,christiani2017set,christiani2017framework}. In this framework, each element $\x$ is mapped to a \emph{set} of filters $F(\x)$.  This is distinct from locality-sensitive hashing, in which $\x$ would be mapped to a single hash value.

While locality-sensitive filtering is distinct from locality-sensitive hashing, preprocessing and searching follows essentially the same high-level idea.  

To search for a query $\q$ we iterate through each filter $f\in F(\q)$.  For each such filter, we test all vectors $\x\in S$ such that $f\in F(\x)$.  In other words, we iterate through each vector that has a filter in common with $\q$, and test the similarity of $\x$ and $\q$.  If we find a sufficiently close $\x$ we return it; otherwise we return failure after exhausting all $f\in F(\q)$.

Thus, the goal of preprocessing is to make it easy to find elements that have a filter in common with a given query.  For each filter $f$ mapped to by some element of $S$, we store a list of all $\x'$ such that $f\in F(\x')$.  These lists can be stored and accessed easily (i.e. in a hash table), and this method takes space linear in $\sum_{x\in S} |F(\x)|$.  We can preprocess quickly by calculating $F(\x)$ for all $\x\in S$.

Thus, the goal of our data structure is to define a randomized mapping of vectors to sets of filters satisfying:
\begin{itemize}[topsep=0pt,noitemsep]
    \item If $\x$ and $\q$ have low similarity, then $F(\x)\cap F(\q)$  is small in expectation (guaranteeing small space and fast execution).
    \item If $\x$ and $\q$ have high similarity, then $F(\x)\cap F(\q)$ is non-empty with high probability (guaranteeing correctness).
\end{itemize}

\paragraph{Computing $F(\x)$: the choice of paths}
Our data structure is based on the Chosen Path data structure of Christiani and Pagh~\cite{christiani2017set}. For our data structure, each $f\in F(\x)$ corresponds to a path, i.e., an ordered sequence $(i_1,\ldots , i_{\ell})\subseteq [d]^{\ell}$ where each $i_j\in [d]$ is the index of one of the $d$ dimensions. The construction of $F(\x)$ ensures that if $f\in F(\x)$, then it must hold that $\x_i = 1$ for all $i\in f$. We say that the path $f$ was \emph{chosen} by $\x$ if $f\in F(\x)$. 

The data structure comes with a (deterministic) function $s$ which maps each vector $\x\in \{0,1\}^d$, path-length $j$ and bit $i\in [d]$ to a \emph{threshold} $\st{\x}{j}{i}\in [0,1]$. 
The choice of $s$ depends on the desired application and will be specified later on.  In particular, $s$ is how our data structure adapts to the distribution---previous data structures essentially used a constant function for $s$.\footnote{As mentioned previously, this is not the only distinguishing technical detail.  We must also sample without replacement and adjust the path length dynamically based on the probabilities of the sampled bits.}

When initializing our data structure, we once and for all select $k$ hash functions, $h_1,\ldots, h_k$, where $h_j:[d]^j\rightarrow [0,1]$, each chosen independently from a family $\mathcal{H}$ of pairwise independent hash functions. These hash functions are fixed throughout.

We now explain how to recursively compute the set of paths $F(\x)$. Initially, let $F_0(\x)$ consist of the empty path.  We recursively grow the paths one step at a time; in particular, $F_j(\x)$ contains paths of length $j$.  
To demonstrate our recursive process, let $v=(i_1,\ldots , i_j)$ be a path in $F_{j}(\x)$.  
If $\prod_{k=1}^j p_{i_k} \leq 1/n$, we stop recursing, and $v$ is a filter of $\x$.
Otherwise, 
we independently consider each set bit $i$ of $\x$ which is not already in $v$.  With probability $s(\x,j,i)$, we concatenate $i$ to the end of $v$; this results in a new filter $v' \in F_{j+1}(\x)$ with $v' = v\concat i$ (where $\concat$ denotes concatenation). This probabilistic choice is made using $h_{j+1}(v\concat i)$.

We formally define this recursive process using the following equation.

\begin{equation*}
F_{j+1}(\x)=\left\{v\concat i \,\middle|\, \begin{aligned}&v=(i_1,\ldots , i_j)\in F_{j}(\x),\text{}\,\\& \prod_{i_k=1}^j p_{i_k} > 1/n,\text{}\\ & i\in \x\setminus v,\text{} \\&h_{j+1}(v\concat i) < \st{\x}{j}{i} \end{aligned}\right\}.
\end{equation*}

Finally, we define $F(\x)$ to be the union of all paths that stopped recursing.
\[
    F(\x)=\bigcup_{j=1}^k \left\{v=(i_1,\ldots , i_j)\in F_j(\x)\,\middle|\,  \prod_{k=1}^j p_{i_k} \leq 1/n\right\}.
\]

We can calculate $F(\x)$ in $O(d|F(\x)|)$ time.

\paragraph{Preprocessing} 
During the preprocessing, we randomly select the $k$ hash functions and compute $F(\x)$ for each $\x$. Then, we use a standard dictionary data structure to construct an inverted index such that for each $f\in \cup_{\x\in S} F(\x)$, we can look up $\{\x\in S\colon f\in F(\x)\}$.

\paragraph{Answering a query}
For a given query $\q$ we compute its chosen paths $F(\q)$ as described above (using the same hash functions as in the preprocessing). For each $f\in F(\q)$, we then compute $B(\x,\q)$ for every $\x$ which chose the path $f$, i.e., for which $f\in F(\x)$. If an $\x$ with similarity at least $b_1$ is found then we return $\x$. If we have exhausted all candidates without finding such an $\x$, we report that no high-similarity vector was found.

%%% Local Variables:
%%% TeX-master: "main.tex"
%%% End:

\section{Correctness}

We begin with a structural lemma on the number of paths two vectors $\x$ and $\q$ have in common.  We will use this lemma to prove correctness of both of our data structures.  This lemma follows the same high-level idea of the proof of correctness used in~\cite{christiani2017set}, but must be generalized to handle the distribution-dependent choices made by the data structures.  The proof has been moved to Section~\ref{sec:omittedproofs}.

\begin{lemma}
\label{lem:cor}
Suppose that for $\x\in \{0,1\}^d$ and $\q\in \{0,1\}^d$ the following holds: For every $1\leq j\leq k$ and every $v=(i_1,\ldots , i_j)\in [d]^j$, we have
\begin{align}
    \sum_{i\in (\x\cap\q)\setminus v} \min\{\st{\x}{j}{i}, \st{\q}{j}{i}\}\geq 1.
    \label{eq:branch}
\end{align}
Then, $\Pr[F(\x)\cap F(\q)\neq \emptyset]\geq 1/\log n$.
\end{lemma}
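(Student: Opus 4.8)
The plan is to view the common paths of $\x$ and $\q$ as a branching process and apply the second-moment method. Observe that a path $v=(i_1,\dots,i_\ell)$ lies in $F(\x)\cap F(\q)$ exactly when (i) every coordinate $i_t\in\x\cap\q$, (ii) $v$ satisfies the stopping condition $\prod_t p_{i_t}\le 1/n$ while no proper prefix does, and (iii) at each step the shared hash value clears \emph{both} thresholds, i.e.\ $h_t\bigl((i_1,\dots,i_{t-1})\concat i_t\bigr)<\min\{\st{\x}{t-1}{i_t},\st{\q}{t-1}{i_t}\}$. Since the hash values used along a single path live at distinct levels, hence are independent, writing $W(v)=\prod_{t=1}^{\ell}\min\{\st{\x}{t-1}{i_t},\st{\q}{t-1}{i_t}\}$ gives $\Prob[v\in F(\x)\cap F(\q)]=W(v)$. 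Letting $X=\ab{F(\x)\cap F(\q)}$ count the common filters, it then suffices to lower bound $\Prob[X\ge 1]\ge \E[X]^2/\E[X^2]$ by $1/\log n$.

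First I would control $\E[X]=\sum_{v\ \text{terminal}}W(v)$. The hypothesis \eqref{eq:branch} says precisely that at every node $v$ of the path tree the expected number of common children, $\sum_{i\in\filterint}\min\{\st{\x}{j}{i},\st{\q}{j}{i}\}$, is at least $1$. Consequently the total $W$-weight of nodes at depth $j{+}1$ is at least the weight of the non-terminal nodes at depth $j$; since terminal nodes keep contributing to $X$, a telescoping over levels yields $\E[X]\ge 1$.

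For the second moment I would exploit that $\mathcal{H}$ is pairwise independent. For two terminal paths $u,v$ whose longest common prefix is $w$, the hash evaluations agree on $w$ and, past the branch point, at each level involve two \emph{distinct} arguments of the same $h$; pairwise independence makes these independent, so $\Prob[u,v\in F(\x)\cap F(\q)]=W(u)W(v)/W(w)$. Summing, $\E[X^2]=\sum_{u,v}W(u)W(v)/W(\mathrm{lcp}(u,v))$, which I would organize by the level at which $u$ and $v$ meet. The crucial quantitative input is that the tree has depth $L\le\log_2 n=O(\log n)$: each extension multiplies the prefix product by some $p_i\le 1/2$, so after at most $\log_2 n$ steps it drops below $1/n$ and the path terminates. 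Combining bounded depth with the per-level weight accounting from the first-moment step, the target is $\E[X^2]=O(\log n)\,\E[X]^2$, which gives $\Prob[X\ge 1]=\Omega(1/\log n)$.

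The main obstacle is exactly this second-moment bound. Unlike in {\sc ChosenPath}, where the per-node branching is homogeneous, here the thresholds $\st{\x}{j}{i}$ vary with the skew of $\dis$, and one must rule out the scenario where the expected number of colliding pairs is dominated by a few deep, heavily branching subtrees reached through low-probability prefixes (which can blow up $\E[X^2]$ relative to $\E[X]^2$). I expect the right remedy is to pass to a sub-process in which, at each node, one keeps only a minimal set of extensions whose $\min$-threshold sum lies in $[1,2)$ — removing extensions only decreases the collision probability, so this is sound for a lower bound — and then to charge each colliding pair to its meeting level, using that there are at most $L=O(\log n)$ levels. Making this charging argument robust to skew-dependent branching and to the variable recursion depth is precisely where the analysis of~\cite{christiani2017set} must be tightened.
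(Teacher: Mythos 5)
Your skeleton---pass to a sub-process of the common paths, use that the recursion depth is at most $\log_2 n$ since every $p_i\le 1/2$, and finish with the second-moment method under pairwise independence---is the same as the paper's, and your first-moment step (telescoping the per-node lower bound of $1$ from hypothesis~(\ref{eq:branch})) is sound. The gap is exactly the step you defer: the bound $\E[X^2]=O(\log n)\,\E[X]^2$ is never proved, and the remedy you propose, pruning each node's extensions to a \emph{minimal} subset whose $\min$-threshold sum lies in $[1,2)$, provably does not suffice. Pruning keeps a subset of atoms, so it cannot force the per-node offspring mean to be near $1$: if a node has two available extensions each of threshold close to $1$, every minimal subset with sum at least $1$ still has sum close to $2$. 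This permits exactly the blow-up you worry about. Concretely, at depth $L=\Theta(\log n)$, let the root have one child $a$ of threshold $2^{-L}$ plus two chains of total threshold $1-2^{-L}$ (root sum $=1$; each chain node has a single child of threshold $1$), and let every node strictly below $a$ have two children of threshold $1-o(1)$ each (sum $\approx 2$, and minimal, since dropping either child leaves the sum below $1$). Conditioned on $a$ being chosen (probability $2^{-L}$), the subtree below $a$ is a supercritical process with conditional mean $\approx 2^{L-1}$ leaves and conditional second moment $\Theta(4^{L})$. Hence $\E[X]=\Theta(1)$ while $\E[X^2]\ge 2^{-L}\cdot\Theta(4^{L})=\Theta(2^{L})$, so the second-moment bound yields only $\Pr[X\ge 1]=O(2^{-L})=O(1/n)$, not $\Omega(1/\log n)$; charging pairs to their meeting level cannot repair this, since the pairs meeting at $a$ alone already contribute $\Theta(2^L)$. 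Note also that you cannot appeal to homogeneity across nodes of a level: your pruned subsets are themselves path-dependent, so this kind of inhomogeneous tree is within the scope of your argument.

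The paper's proof uses hypothesis~(\ref{eq:branch}) in a different, and essential, way: it does not \emph{prune} but \emph{rescales}. For $v\in F_j(\x)\cap F_j(\q)$ it sets $s(v\concat i)=\min\{\st{\x}{j}{i},\st{\q}{j}{i}\}\big/\sum_{i'\in(\x\cap\q)\setminus v}\min\{\st{\x}{j}{i'},\st{\q}{j}{i'}\}$, which is dominated by the original $\min$-thresholds precisely because~(\ref{eq:branch}) makes each denominator at least $1$; thus the resulting family $M$ is a legitimate sub-process of $F(\x)\cap F(\q)$. The rescaling makes every node's offspring mean \emph{exactly} $1$, i.e.\ a critical branching process, which eliminates the rare-prefix/supercritical-subtree scenario above (in your notation, every conditional growth factor $f(w)$ equals $1$). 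A short induction using pairwise independence then gives $\E[\ab{M_j}]=1$ and $\E[\ab{M_j}^2]\le j+1$, whence $\Pr[F(\x)\cap F(\q)\neq\emptyset]\ge\Pr[M_j\neq\emptyset]\ge 1/(j+1)\ge 1/\log n$. So the missing idea is normalization rather than truncation; with it, your branching-process viewpoint collapses to the paper's argument.
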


The following lemma is used to prove performance of our algorithms.  We use an inductive argument to bound the number of paths generated by the data structure.   This argument depends crucially on both the thresholds $s(\x,j,i)$ and the distribution-dependent stopping rule.

\begin{lemma}
    \label{lem:per}
Let $\x\in \{0,1\}^d$, and let $\rho$ be such that $\sum_{i\in\x}p_i^\rho \stx \leq c$. Then $\E[\ab{F(\x)}]=O(n^{\rho}c^{\log n})$. Furthermore, the expected time spend on computing $F(\x)$ is $O(n^\rho c^{\log n}\ab{\x})$
\end{lemma}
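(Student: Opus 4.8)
The plan is to bound $\E[\ab{F(\x)}]$ by linearity of expectation over the possible filters and then control the resulting sum with an inductive, level-by-level recursion driven by the hypothesis $\sum_{i\in\x} p_i^\rho\,\stx \le c$. First I would compute, for a fixed path $v=(i_1,\dots,i_j)$ all of whose coordinates are set in $\x$, the probability that $v$ is chosen. Since the choice at level $t$ is governed by the independent hash function $h_t$, the events are independent across levels and $\Prob[v\in F_j(\x)] = \prod_{t=1}^{j} \st{\x}{t-1}{i_t}$. Hence $\E[\ab{F(\x)}]=\sum_v \prod_{t=1}^{j}\st{\x}{t-1}{i_t}$, where the sum ranges over all paths that meet the stopping rule.

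Next I would record two structural facts. Because every $p_i\le 1/2$, the product $\prod_{t\le j} p_{i_t}$ drops by at least a factor of two per step, so the stopping condition $\prod_{t\le j} p_{i_t}\le 1/n$ is reached within $j\le \log_2 n$ steps; thus all filters live at levels $1,\dots,\log_2 n$. To exploit the hypothesis I would attach to each length-$j$ path the weight $g(v)=\prod_{t=1}^{j} p_{i_t}^{\rho}\,\st{\x}{t-1}{i_t}$. For any extendable node the hypothesis states that the $g$-weights of its children sum to at most $c$ times its own weight, so by induction the total $g$-weight of all length-$j$ paths is at most $c^{j}$. Dropping the distinctness constraint only enlarges this sum, so the bound survives the fact that coordinates are sampled without replacement.

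The final step converts chosen-probabilities into $g$-weights using the stopping rule. A filter $v$ at level $j$ satisfies $\prod_{t\le j} p_{i_t}\le 1/n$ while its (extendable) parent satisfies $\prod_{t< j} p_{i_t}>1/n$; combined with $\st{\x}{j-1}{i_j}\le 1$ this lets me charge $\prod_{t\le j}\st{\x}{t-1}{i_t}$ to $n^{\rho}$ times the parent's weight $g$. Summing over the extendable parents at each of the $O(\log n)$ levels and using $\sum_j c^{j}=O(c^{\log n})$ yields $\E[\ab{F(\x)}]=O(n^{\rho}c^{\log n})$. For the running time, computing $F(\x)$ simply walks the tree of chosen paths, spending $O(\ab{\x})$ work per node to scan the set bits of $\x$, hash each candidate extension, and compare it against its threshold; since the same recursion bounds the expected number of chosen (internal) nodes by $O(n^{\rho}c^{\log n})$, the expected work is $O(n^{\rho}c^{\log n}\ab{\x})$.

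I expect the charging step to be the main obstacle. The stopping rule only guarantees that the product at a filter is \emph{at most} $1/n$: appending a low-probability coordinate can push it far below $1/n$, and the hypothesis controls the $p^{\rho}$-weighted threshold mass rather than the raw threshold mass over such coordinates. Handling this overshoot at the termination boundary -- so that the $n^{\rho}$ factor emerges in aggregate rather than per path, and so that the branching at the last step does not inflate the count -- while simultaneously accommodating the per-branch, distribution-dependent recursion depth and the sampling without replacement, is exactly where the argument must go beyond the uniform-depth analysis of \textsc{ChosenPath}.
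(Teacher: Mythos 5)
Your inductive engine is the paper's argument in different clothing. The total-$g$-weight bound of $c^j$ at level $j$ is equivalent to the paper's double induction showing $\E[\ab{F^t_j(\x)}]\leq 2^{\rho t}c^{j}$, where $F^t_j(\x)$ is the set of level-$j$ paths $v$ with $\sum_{i\in v}\log(1/p_i)\leq t$: for such a path, $\Prob[v\in F_j(\x)]=\prod_{r=1}^{j}\st{\x}{r-1}{i_r}=g(v)\prod_{r=1}^{j}p_{i_r}^{-\rho}\leq 2^{\rho t}g(v)$, so your weights and the paper's counts are interchangeable. Your treatment of sampling without replacement (dropping the distinctness constraint can only increase the weight; the paper restricts to $F^t_j(\x,i)\subseteq F^t_j(\x)$) and of the depth bound $\log_2 n$ also coincide with the paper's.

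The gap is exactly where you predicted it, and as written your charging step does not give the claimed bound on $\E[\ab{F(\x)}]$. You bound the last threshold $\st{\x}{j}{i}$ by $1$ and charge each filter $v=v'\concat i$ to $n^{\rho}g(v')$. But a single extendable parent $v'$ can have up to $\ab{\x}$ filter children, each receiving the full charge $n^{\rho}g(v')$, so summing over filters yields only $n^{\rho}\,\ab{\x}\sum_{j}c^{j}=O(n^{\rho}c^{\log n}\ab{\x})$ --- an extra factor $\ab{\x}$ in the filter count (your \emph{time} bound survives, since it carries an $\ab{\x}$ anyway, but the first claim of the lemma, and hence the space bound, does not). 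The paper avoids this loss by never discarding the last-step threshold: it writes each filter as an extension of a parent $v'\in F^{\log n}_j(\x,i)$ and keeps the factor $\E[Y(\x,v',i)]=\st{\x}{j}{i}$, so that
\[
\E[\ab{F(\x)}]\;\leq\;\sum_{j=0}^{\log n}\E\bigl[\ab{F^{\log n}_{j}(\x)}\bigr]\sum_{i\in\x}\st{\x}{j}{i},
\]
i.e.\ the per-parent charge is the \emph{expected} branching $\sum_{i\in\x}\st{\x}{j}{i}$ rather than the worst-case branching $\ab{\x}$. This expected branching is then treated as $O(1)$, which holds for every instantiation of the thresholds in the paper ($\st{\x}{j}{i}\approx 1/(b_1\ab{\x}-j)$ in the adversarial case, $(1+\delta)/(\hat{p}_iC\log n-j)$ in the correlated case); admittedly this is also the loosest point of the paper's own proof, since the hypothesis $\sum_{i\in\x}p_i^{\rho}\st{\x}{j}{i}\leq c$ alone cannot control the raw threshold mass $\sum_{i\in\x}\st{\x}{j}{i}$ when some $p_i$ are tiny --- which is precisely the overshoot you identified. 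To repair your write-up you must make the same move: keep $\st{\x}{j}{i}$ in the leaf accounting and bound the expected branching directly, instead of trying to extract it from the $p^{\rho}$-weighted hypothesis.
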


\begin{proof}
For $v\in F_j(\x)$, define the random variables $Y(\x,v,i)=\mathbf{1}_{h_{j}(v\concat i)\leq \stx}$. Let $F^t_j(\x)\subseteq F_j(\x)$ be the set of paths $v\in F_j(\x)$ such that $\sum_{i\in v} \log 1/p_i \leq t$.  Furthermore, let $F^t_j(\x,i)\subseteq F^t_j(\x)$ be the set of paths $v\in F^t_j(x)$ for which $i\notin v$.
We claim that $\E[\ab{F^t_j}]$ is at most $2^{\rho t}c^{j}$. The proof is by induction over $j$ and $t$. Note that for every $v\in F^t_{j+1}$, there must exist $i\in\x$ and $v'\in F^{t-\log(1/p_i)}(\x,i)$ such that $v=v\concat i'$ and $Y(\x,v,i)=1$. Thus,
\begin{align*}
    \E[\ab{F^t_{j+1}}]&=\E\left[\sum_{i\in \x}\sum_{v\in F^{t-\log (1/p_i)}_{j}(\x,i)}Y(\x,v,i)\right] \\
    &=\E\left[\sum_{i\in \x}\sum_{v\in F^{t-\log (1/p_i)}_{j}(\x,i)}\E[Y(\x,v,i)]\right]\\
    &=\E\left[\sum_{i\in \x}\sum_{v\in F^{t-\log (1/p_i)}_{j}(\x,i)}\stx\right] \\
    &\leq\sum_{i\in\x}\E[\ab{F^{t-\log(1/p_i)}_{j}(\x,i)}]\stx\\
&\leq\sum_{i\in\x}\E[\ab{F^{t-\log(1/p_i)}_{j}(\x)}]\stx \\
    &\leq\sum_{i\in \x}2^{\rho (t-\log(1/p_i))}c^{j} \stx\\
&\leq 2^{\rho t}c^j \sum_{i\in \x}p_i^\rho \stx \\
    &\leq 2^{\rho t}c^{j+1}.
\end{align*}
    Now, for every $v\in F(\x)$ there must exist $j$ and $i\in\x$ such that $v=v'\concat i$ for some $v'\in F_j^{\log n}(\x)$. (The $\log n$ follows from taking the log of both sides of $\prod_{i_k} p_{i_k} \leq 1/n$). It follows that
\begin{align*}
\E[\ab{F(\x)}]&=\E\left[\sum_{j=0}^{\log n} \sum_{i\in \x}\sum_{v\in F^{\log n}_{j}(\x,i)}Y(\x,v,i)\right] \\
    & \leq \sum_{j=0}^{\log n} \E[\ab{F_j^{\log n}}] \sum_{i\in x}\stx\\
&\leq 2^{\rho\log n} c^{\log (n)+1}\sum_{i\in\x}\stx \\
    &=O(n^{\rho}c^{\log n}).
\end{align*}
We now bound the expected time for computing $F(\x)$. To this end, note that we spend at most $O(\ab{\x})$ time in each recursive step, and that the expected number of recursive steps is at most $\sum_{j=0}^{\log n} \E[\ab{F^{\log (n)}_j}] =O(n^{\rho}c^{\log n})$.
\end{proof}

The next lemma shows that because we stop each branch of our recursive process once the expected number of vectors from $S$ which choses this path is constant, it follows that the expected query time is linear in $\ab{F(\q)}$.

\begin{lemma}
Let $\q\in\{0,1\}^d$, and let $\rho$ be such that $\sum_{i\in\x}p_i^\rho \st{\q}{j}{i} \leq c$. Furthermore, let $S\sim \dis^n[p_1,\ldots , p_d]$. Then $\E[\sum_{\x\in S} \ab{F(\q)\cap F(\x)}]=O(n^{\rho}c^{\log n})$.
\end{lemma}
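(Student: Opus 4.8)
The plan is to prove the stronger bound $\E\bigl[\sum_{\x\in S}\ab{F(\q)\cap F(\x)}\bigr]\leq \E[\ab{F(\q)}]$, after which the claim follows at once by applying Lemma~\ref{lem:per} to $\q$ under the hypothesis $\sum_{i\in\q}p_i^\rho\,\st{\q}{j}{i}\leq c$. First I would expand the left-hand side by linearity over individual paths, writing the collision count as $\sum_{\x\in S}\sum_v \mathbf{1}[v\in F(\q)]\,\mathbf{1}[v\in F(\x)]$ with $v$ ranging over candidate paths. Since the data vectors of $S$ are drawn independently from $\dis$ and independently of the hash functions, the expectation of the sum equals $n\cdot\E_{\x\sim\dis,\,h}[\ab{F(\q)\cap F(\x)}]$, so it suffices to bound the expected number of paths shared between $\q$ and a single random $\x\sim\dis$.

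For a fixed path $v=(i_1,\ldots,i_j)$ I would condition on $\x$ and analyze when $v\in F(\q)\cap F(\x)$. This requires that every bit of $v$ be set in both $\q$ and $\x$, that $v$ be a final path (i.e.\ $\prod_{m=1}^{j}p_{i_m}\leq 1/n$, a condition depending only on the bits of $v$ and hence identical for $\q$ and $\x$), and that at each step the shared hash value pass both thresholds. Because the length-$m$ test uses the independently chosen hash function $h_m$ evaluated at the distinct prefix $(i_1,\ldots,i_m)$, the per-step events are independent, and the conditional collision probability factors as $\prod_{m=1}^{j}\min\{\st{\q}{m-1}{i_m},\st{\x}{m-1}{i_m}\}$. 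The key maneuver is to bound $\min\{\st{\q}{m-1}{i_m},\st{\x}{m-1}{i_m}\}\leq \st{\q}{m-1}{i_m}$, which strips away the dependence of the thresholds on the random vector $\x$; taking expectation over $\x$ then leaves only $\E_\x[\mathbf{1}[v\subseteq\x]]=\prod_{m=1}^{j}p_{i_m}$.

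Combining these, each final path $v\subseteq\q$ contributes at most $\bigl(\prod_{m}p_{i_m}\bigr)\prod_{m}\st{\q}{m-1}{i_m}$ to $\E_{\x,h}[\ab{F(\q)\cap F(\x)}]$. The stopping rule guarantees $\prod_{m}p_{i_m}\leq 1/n$ for every final path, so after multiplying by the factor $n$ coming from the sum over $S$ the two factors cancel, leaving exactly $\sum_{v\ \text{final},\,v\subseteq\q}\prod_{m}\st{\q}{m-1}{i_m}=\E[\ab{F(\q)}]$, and Lemma~\ref{lem:per} gives $O(n^\rho c^{\log n})$. The main obstacle is the coupling between $F(\q)$ and $F(\x)$ through the shared hash functions together with the dependence of the thresholds $s$ on the random vector $\x$; the bound $\min\le\st{\q}{m-1}{i_m}$ is what cleanly decouples these and makes the cancellation with the stopping rule work. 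A minor care point is the factorization of the per-path collision probability across its $j$ steps, which holds because each step uses a different, independently chosen hash function, so its evaluation is uniform on $[0,1]$ and independent across steps.
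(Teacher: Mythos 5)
Your proposal is correct and follows essentially the same route as the paper: both reduce to the bound $\E\bigl[\sum_{\x\in S}\ab{F(\q)\cap F(\x)}\bigr]\leq \E[\ab{F(\q)}]$ by observing that any path $v\in F(\q)$ can lie in $F(\x)$ only if all bits of $v$ are set in $\x$, which by the stopping rule happens with probability $\prod_m p_{i_m}\leq 1/n$, so the factor $n$ from the dataset cancels and Lemma~\ref{lem:per} finishes the argument. The only difference is cosmetic: the paper drops the hash-threshold events for $\x$ immediately (bounding $\Pr[v\in F(\x)]$ by the bit-membership probability alone, conditionally on the hash functions), whereas you carry them through the $\min$-threshold factorization before discarding them, arriving at the same cancellation.
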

\begin{proof}
From Lemma \ref{lem:per}, we know that $\E[\ab{F(\q)}]=O(n^{\rho}c^{\log n})$. Let $v=(i_1,\ldots , i_j)\in F(\q)$ be any path chosen by $\q$. By definition, $\Pr[v\in F(\x)]\leq \Pr[\x_{i_1}=1\land \cdots\land \x_{i_j}=1] \leq 1/n$. Thus,
\begin{align*}
\E\left[\sum_{\x\in S} \ab{F(\q)\cap F(\x)}\right]&=\E\left[\sum_{v\in F(\q)}\sum_{\x\in S}\mathbf{1}_{\{v\in F(\x)\}}  \right] \\
    & =\E\left[\sum_{v\in F(\q)}\sum_{\x\in S}\E[\mathbf{1}_{\{v\in F(\x)\}}]\right]\\
&\leq \E\left[ \sum_{v\in F(\q)}\sum_{\x\in S}1/n\right]\\
    & = \E[\ab{F(\q)}]\\
    & =O(n^{\rho}c^{\log n}).\qedhere
\end{align*}
\end{proof}

\section{Adversarial Queries}
\label{sec:adversarial}

For the adversarial case, we set $\st{\x}{j}{i}=\frac{1}{b_1\ab{\x}-j}$. Thus, the sampling thresholds only depend on $\ab{\x}$ and the number of bits already contained in the path $v$.  The remainder of the data structure follows the description in Section~\ref{sec:datastructure}.

\subsection{Correctness}
We begin with a proof of correctness, giving a guarantee that if there exists an $\x$ similar to a query $\q$, then $\x$ and $\q$ will share a filter (so $\x$ will be found by our algorithm).  

Fix $\x$ and $\q$ such that $B(\x,\q)\geq b_1$ and fix $v=(i_1,\ldots , i_j)\in [d]^j$. The assumption $B(\x,\q)\geq b_1$ is equivalent to $\ab{\x\cap\q}\geq b_1\max\{\ab{\x},\ab{\q}\}$. Thus,
\begin{align*}
    \sum_{i\in (\x\cap\q)\setminus v} \min\{\st{\x}{j}{i}, \st{\x}{j}{i}\}
    &={} \sum_{i\in (\x\cap\q)\setminus v}\frac{1}{b_1\max\{\ab{\x},\ab{\q}\}-j}\\
    &={} \frac{\ab{\x\cap\q}-j}{b_1\max\{\ab{\x},\ab{\q}\}-j}\\
    &\geq{} 1.
\end{align*}

Correctness follows immediately from Lemma~\ref{lem:cor}.

\subsection{Performance guarantees}

\paragraph{Query time}
We bound the query time using a constant $\rho$ that depends only on $b_1$ and $\dis$.

\begin{lemma}
For every $\varepsilon>0$ there exists a constant $C$ such that if $\sum_{i\in\q}p_i^\rho\leq b_1\ab{\q}$ and $\ab{\q}\geq C\log n$, then $\E[\ab{F(\q)}]=O(n^{\rho+\varepsilon})$.
\end{lemma}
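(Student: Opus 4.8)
The plan is to invoke Lemma~\ref{lem:per} with the query $\q$ in the role of $\x$, so that essentially all the work reduces to verifying the hypothesis of that lemma with a constant $c$ very close to $1$. Recall that in the adversarial setting the threshold is $\st{\q}{j}{i}=\tfrac{1}{b_1\ab{\q}-j}$, depending on $j$ only through the denominator. First I would pin down the range of $j$: since every bit satisfies $p_i\le 1/2$, any path of length $j$ has $\prod_{k=1}^j p_{i_k}\le 2^{-j}$, so the stopping rule $\prod p_{i_k}\le 1/n$ forces every path that is still being extended to have $j<\log n$. Together with the hypothesis $\ab{\q}\ge C\log n$ this keeps the denominator strictly positive (as soon as $C>1/b_1$) and lets me compare the dynamic threshold to the static baseline $1/(b_1\ab{\q})$.

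Concretely, for every $j\le\log n$ and every $i$,
\[
\frac{\st{\q}{j}{i}}{1/(b_1\ab{\q})}=\frac{b_1\ab{\q}}{b_1\ab{\q}-j}\le\frac{b_1\ab{\q}}{b_1\ab{\q}-\log n}\le\frac{1}{1-\tfrac{1}{b_1C}}=\frac{b_1C}{b_1C-1},
\]
so $\st{\q}{j}{i}\le\tfrac{1}{b_1\ab{\q}}\cdot\tfrac{b_1C}{b_1C-1}$ uniformly. Plugging this bound in and applying the hypothesis $\sum_{i\in\q}p_i^\rho\le b_1\ab{\q}$ gives, for every $j$,
\[
\sum_{i\in\q}p_i^\rho\,\st{\q}{j}{i}\le\frac{b_1C}{b_1C-1}\cdot\frac{1}{b_1\ab{\q}}\sum_{i\in\q}p_i^\rho\le\frac{b_1C}{b_1C-1}=:c.
\]
Thus the hypothesis of Lemma~\ref{lem:per} holds with $c=\tfrac{b_1C}{b_1C-1}=1+O(1/C)$, and the lemma yields $\E[\ab{F(\q)}]=O(n^\rho c^{\log n})$.

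It remains to absorb the $c^{\log n}$ factor into the $n^\varepsilon$ slack. Treating $\log$ as base $2$ (consistent with the stopping-rule analysis) and writing $c=1+\tfrac{1}{b_1C-1}$, I would use $\log_2(1+x)\le x/\ln 2$ to obtain
\[
c^{\log n}=n^{\log_2 c}\le n^{\frac{1}{(b_1C-1)\ln 2}}=n^{O(1/C)},
\]
so that $\E[\ab{F(\q)}]=O(n^{\rho+O(1/C)})$. Given $\varepsilon>0$, choosing $C$ large enough that the hidden $O(1/C)$ exponent is at most $\varepsilon$ completes the proof. The argument is almost entirely mechanical once Lemma~\ref{lem:per} is available; the one step that genuinely needs care is the path-length bound $j<\log n$, since it is precisely what forces the dynamic threshold to differ from the static $1/(b_1\ab{\q})$ by only a $(1+O(1/C))$ factor, and hence what keeps the otherwise dangerous $c^{\log n}$ term subpolynomial. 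I would also flag that $b_1C-1>0$ must be checked, which is automatic once $C$ is taken large.
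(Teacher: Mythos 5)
Your proof is correct and follows essentially the same route as the paper: you bound the dynamic threshold $\st{\q}{j}{i}$ by $\frac{1}{b_1\ab{\q}}\cdot\frac{1}{1-1/(b_1C)}$ using $j\leq \log n$ and $\ab{\q}\geq C\log n$, verify the hypothesis of Lemma~\ref{lem:per} with $c=1+O(1/C)$, and absorb $c^{\log n}=n^{O(1/C)}$ into the $n^{\varepsilon}$ slack. The only difference is that you spell out two steps the paper leaves implicit---the path-length bound $j<\log n$ via the stopping rule together with $p_i\leq 1/2$, and the final conversion of $c^{\log n}$ into a subpolynomial factor---which only strengthens the write-up.
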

\begin{proof}
First, if $\ab{\q}\geq C\log n$ for some $C>1$, then
\begin{align*}
\st{\q}{j}{i}&=\frac{1}{b_1\ab{\q}-j}\\
    &\leq\frac{1}{b_1\ab{\q}-\log n}\\
    &= \frac{1}{b_1\ab{\q}}\frac{1}{1-(\log n)/(b_1\ab{\q})}\\
    &\leq \frac{1}{b_1\ab{\q}}\frac{1}{1-1/(b_1C)}.
\end{align*}
Thus,
\begin{align*}
    \sum_{i\in \q}p_i^\rho \st{\q}{j}{i} &\leq \sum_{i\in\q}\frac{p_i^\rho}{b_1\ab{\q}}\frac{1}{1-1/(b_1C)}\\
    &\leq \frac{1}{1-1/(b_1C)}. 
\end{align*}
It follows from Lemma~\ref{lem:per} that $\E[\ab{F(\q)}]=O\left(n^{\rho}\left(\frac{1}{1-1/(b_1C)}\right)^{\log n}\right)$.
\end{proof}

\paragraph{Preprocessing time} 
The following lemma bounds our expected per-element processing time.  The proof has been moved to Section~\ref{sec:omittedproofs}.

\begin{lemma}
    \label{lem:preprocess}
For every $\varepsilon>0$ there exists a constant $C$ such that if $\sum_{i\in[d]}p_i^{1+\rho}= b_1\sum_{i\in [d]}p_i$ and $\sum_{i\in[d]}p_i\geq C\log n$, then $\E[\ab{F(\x)}]=O(n^{\rho+\varepsilon})$ for $\x\sim\dis[p_1,\ldots , p_d]$.
\end{lemma}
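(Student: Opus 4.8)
The plan is to reduce the statement to Lemma~\ref{lem:per} by showing that, for a \emph{typical} draw of $\x\sim\dis$, the quantity $\sum_{i\in\x}p_i^\rho\,\st{\x}{j}{i}$ is bounded by a constant $c$ that can be forced arbitrarily close to $1$, so that the factor $c^{\log n}$ contributed by Lemma~\ref{lem:per} is only $n^{o(1)}$. The whole reason the defining equation $\sum_{i\in[d]}p_i^{1+\rho}=b_1\sum_{i\in[d]}p_i$ appears is that it makes two random quantities cancel in expectation: writing $\mu_1=\sum_ip_i$ and $\mu_2=\sum_ip_i^{1+\rho}=b_1\mu_1$, we have $\E\bigl[\sum_{i:\x_i=1}p_i^\rho\bigr]=\mu_2$ and $\E[\ab{\x}]=\mu_1$, so $\E\bigl[\sum_{i:\x_i=1}p_i^\rho\bigr]/(b_1\E[\ab{\x}])=1$. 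Since $\st{\x}{j}{i}=\tfrac1{b_1\ab{\x}-j}$, for $j\le\log n$ and $\ab{\x}$ near $\mu_1$ this makes $\sum_{i\in\x}p_i^\rho\,\st{\x}{j}{i}\approx \bigl(\sum_{i:\x_i=1}p_i^\rho\bigr)/(b_1\ab{\x})\approx 1$.

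To make this rigorous I would first isolate a good event. Applying the weighted Chernoff bound (Lemma~\ref{wchernoff}) with $a=1$ to $\ab{\x}=\sum_i\x_i$ and with $a=\max_ip_i^\rho\le1$ to $\sum_{i:\x_i=1}p_i^\rho$, let $G$ be the event that simultaneously $\ab{\x}\ge(1-\delta)\mu_1$ and $\sum_{i:\x_i=1}p_i^\rho\le(1+\delta)\mu_2$. Since $\mu_1\ge C\log n$ and $\mu_2=b_1\mu_1\ge b_1C\log n$, both tails are at most $n^{-\Omega(\delta^2 b_1 C)}$, so $\Pr[\bar G]\le n^{-\Omega(\delta^2 b_1 C)}$. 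On $G$, and for $j\le\log n$, we have $b_1\ab{\x}-j\ge b_1\ab{\x}\bigl(1-\tfrac1{b_1(1-\delta)C}\bigr)$, which combined with the two concentration inequalities gives
\[
\sum_{i\in\x}p_i^\rho\,\st{\x}{j}{i}\le\frac{1}{1-\tfrac1{b_1(1-\delta)C}}\cdot\frac{(1+\delta)\mu_2}{b_1(1-\delta)\mu_1}=\frac{1}{1-\tfrac1{b_1(1-\delta)C}}\cdot\frac{1+\delta}{1-\delta}=:c,
\]
a constant tending to $1$ as $\delta\to0$, $C\to\infty$. Feeding $c$ into Lemma~\ref{lem:per} bounds $\E_H[\ab{F(\x)}\mid\x]$ by $O(n^\rho c^{\log n})$ for every $\x\in G$; choosing $\delta$ small and $C$ large so that $\log_2 c\le\varepsilon$ makes $c^{\log n}\le n^{\varepsilon}$, so the contribution of $G$ to $\E_{\x,H}[\ab{F(\x)}]$ is $O(n^{\rho+\varepsilon})$.

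It remains to control $\bar G$, which is where the argument is most delicate, since a few atypical $\x$ could in principle generate many filters. For $\x$ with $\ab{\x}\ge 2(\log_2 n)/b_1$ I would use a crude expected bound: every filter has length at most $\log_2 n$ (as each $p_i\le1/2$), and the expected number of reached length-$j$ paths satisfies $\E_H[\ab{F_j(\x)}]\le\prod_{l=0}^{j-1}\tfrac{\ab{\x}-l}{b_1\ab{\x}-l}\le(2/b_1)^{j}$, whence $\E_H[\ab{F(\x)}\mid\x]\le n^{\log_2(2/b_1)+o(1)}$, a fixed polynomial. Vectors with $\ab{\x}< 2(\log_2 n)/b_1$ lie in $\bar G$ for large $C$ and are routed to a separate brute-force structure (total cost $O(dn)$, contributing no filters). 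Multiplying the polynomial bound by $\Pr[\bar G]\le n^{-\Omega(\delta^2 b_1 C)}$ and taking $C$ large enough that this exponent exceeds $\rho+\log_2(2/b_1)$ shows the $\bar G$ contribution is $o(n^\rho)$. Adding the two parts yields $\E_{\x,H}[\ab{F(\x)}]=O(n^{\rho+\varepsilon})$.

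The hard part, as flagged in the deleted draft, is precisely that the cancellation error feeds into the \emph{exponent} through the $c^{\log n}$ factor: merely constant-factor control of $\bigl(\sum_{i:\x_i=1}p_i^\rho\bigr)/(b_1\ab{\x})$ would already cost a factor $n^{\Theta(1)}$. This forces the two Chernoff estimates to be taken with a common small constant $\delta$ (chosen from $\varepsilon$) against a large $C$, and it is what makes the large-$C$ hypothesis essential rather than cosmetic.
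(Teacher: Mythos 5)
Your proof is correct and follows essentially the same route as the paper's: two Chernoff bounds (a lower tail on $\ab{\x}$ and an upper tail on $\sum_{i\in\x}p_i^\rho$) define a good event on which $\sum_{i\in\x}p_i^\rho\,\st{\x}{j}{i}\leq c$ with $c$ forced arbitrarily close to $1$, Lemma~\ref{lem:per} then yields $O(n^{\rho}c^{\log n})=O(n^{\rho+\varepsilon})$, and the complementary event is absorbed by a crude polynomial bound on $\E_{\mathcal{H}}[\ab{F(\x)}]$ multiplied by its $n^{-\Omega(\cdot)}$ probability, with $C$ taken large last. The only differences are cosmetic: the paper couples the deviation parameter to $C$ by fixing $\delta=C^{-1/3}$ rather than keeping $\delta$ free, and it merely asserts the polynomial fallback bound that you prove explicitly (handling very small $\ab{\x}$ by brute-force routing, consistent with the paper's treatment elsewhere).
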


We multiply this by $n$ to get the expected total preprocessing time for all elements.  Since our data structure takes space linear in the total size of the stored filters, we similarly obtain $O(n^{1 + \rho+\varepsilon})$ space.

\section{Correlated Queries }

We need samples from the distribution to have sufficient length such that we can guarantee that $B(\q,\x) > B(\q,\x')$ for $\x'\sim\dis$ which is not correlated with $\q$ (see Lemma~\ref{lem:correlatedclose}).  
Therefore, we assume there is a sufficiently large constant $C$ that the expected size of an element drawn from $\dis$ is at least $\sum_{i\in [d]} p_i \geq C\log n$.    Similarly, we assume that all $p_i\leq \alpha/2$.
We assume that $C \gg 1/\alpha$.

In the correlated case, we no longer need to sample vertices uniformly at random.  
The query $\q$ and the distribution $\dis$ give us further information about where $\x$ and $\q$ are likely to intersect. In particular, we can calculate the conditional probability
\[
\hat{p}_i = \Pr(\x_i = 1 ~|~ \q_i = 1) = p_i(1-\alpha) + \alpha. 
\]
We weight our chosen path choices by this conditional probability.  We then increase each sampling probability by a small constant $1 + \delta = 1 + 3/(\sqrt{\alpha C})$; this will help us ensure correctness for filters.  (We derive this constant in the proof of Lemma~\ref{lem:correlatedcorrectness}.  A smaller constant is likely sufficient in practice, particularly for small $\alpha$.)
Thus, at round $j$, we sample each bit $i \in \x$, without replacement, with probability 
\[
    \st{\x}{j}{i} = \frac{1 + 3/\sqrt{\alpha C} }{\hat{p}_i C(\log n) - j}.
\] 

With these new sampling probabilities, we again maintain paths as defined in Section~\ref{sec:datastructure}.  To answer a query $\q$, we look through all $\x'\in F(\q)$, returning any $\x'$ that has similarity at least $b_1 = \alpha/1.3$.

\subsection{Correctness}
\label{sec:correlatedanalysis}

For a query $\q$, our data structure attempts to find the $\x$ that is $\alpha$-correlated with $\q$ by finding a similar vector among the items stored for filters in $F(\q)$.  We begin by showing that with high probability, $B(\x,\q) > b_1$; meanwhile, for uncorrelated $\x'$, $B(\x',\q) < b_1$. 

\begin{lemma}
    \label{lem:correlatedclose}
    Assume $q\sim\dis_\alpha(\x)$.
    With high probability, $B(\x,\q) \geq \alpha/1.3$.  Meanwhile, for all $\x'\in S$ not correlated with $\q$, $B(\x',\q) \leq \alpha/1.5$ with high probability.
\end{lemma}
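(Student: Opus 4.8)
The plan is to treat both claims as concentration statements about sums of independent indicators, exploiting that the marginal distribution of $\q$ is exactly $\dis$ and that $p_i\le\alpha/2$ for every $i$. Write $P=\sum_{i\in[d]}p_i$, so that $\E[\ab{\x}]=\E[\ab{\q}]=P\ge C\log n$, and recall $\hat p_i=\alpha+(1-\alpha)p_i$. Two expected intersection sizes drive everything. For the planted pair, $\E[\ab{\x\cap\q}]=\sum_i p_i\hat p_i=\alpha P+(1-\alpha)\sum_i p_i^2\ge\alpha P$; for an independent $\x'\sim\dis$ (which is independent of $\q$, since $\x'\ne\x$ and only the planted vector feeds into $\q$), $\E[\ab{\x'\cap\q}]=\sum_i p_i^2\le(\alpha/2)P$, using $p_i\le\alpha/2$. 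The gap between $\alpha P$ and $(\alpha/2)P$ is precisely what separates close from far pairs, and the assumption $C\gg 1/\alpha$ ensures $\alpha P\ge\alpha C\log n\gg\log n$, so all the relevant sums are large enough for multiplicative concentration.

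For the first claim I would apply Lemma~\ref{wchernoff} (with all $a_i=1$) three times. First, $\ab{\x}$ and $\ab{\q}$ are each sums of independent Bernoullis with mean $P\ge C\log n$, so $\max\{\ab{\x},\ab{\q}\}\le(1+\epsilon)P$ with high probability. Second, $\ab{\x\cap\q}=\sum_i \x_i\q_i$ is a sum of independent indicators (independence across coordinates holds because each coordinate is sampled with fresh randomness) with mean at least $\alpha P\gg\log n$, so $\ab{\x\cap\q}\ge(1-\epsilon)\alpha P$ with high probability. Combining, $B(\x,\q)\ge\frac{(1-\epsilon)\alpha P}{(1+\epsilon)P}=\frac{1-\epsilon}{1+\epsilon}\,\alpha$, which exceeds $\alpha/1.3$ once $\epsilon<0.3/2.3$; this is guaranteed by taking $C$ large, since the failure probabilities are $\exp(-\Omega(\epsilon^2\alpha C\log n))$.

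For the second claim, fix $\q$; the event $\ab{\q}\ge(1-\epsilon)P$ holds with high probability, so it suffices to lower-bound the denominator by $\ab{\q}$ and show $\ab{\x'\cap\q}\le(\alpha/1.5)(1-\epsilon)P$ for every uncorrelated $\x'$. For a single $\x'$, $\ab{\x'\cap\q}$ is again a sum of independent indicators, now with mean $\mu'=\sum_i p_i^2\le(\alpha/2)P$, and the target threshold $t=(\alpha/1.5)(1-\epsilon)P$ satisfies $t/\mu'\ge\tfrac{4}{3}(1-\epsilon)$, a constant factor above the mean. A Chernoff upper-tail bound then gives $\Pr[\ab{\x'\cap\q}>t]\le n^{-\Omega(\alpha C)}$, and a union bound over the at most $n$ uncorrelated $\x'$ together with the single event for $\ab{\q}$ finishes the argument: choosing $C$ large makes the exponent exceed any desired constant, and $1/1.5>1/2$ leaves the needed slack over $\mu'/P\le\alpha/2$.

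The main obstacle is this upper-tail bound, because $\mu'=\sum_i p_i^2$ is not pinned down: depending on the skew it can range from $\Theta(P)$ down to values far below $\log n$ (e.g.\ when the mass of $P$ is spread over many tiny $p_i$). When $\mu'$ is small, the multiplicative deviation $\delta=t/\mu'-1$ is huge and one must use the absolute-threshold form $\Pr[X\ge t]\le(e\mu'/t)^t\le 2^{-t}$, whereas when $\mu'$ is near its maximum $(\alpha/2)P$ the deviation is only the constant $\tfrac{1}{3}$ and one instead relies on $t=\Omega(\alpha C\log n)\gg\log n$. Verifying that the exponent is smallest precisely at $\mu'=(\alpha/2)P$, so that $n^{-\Omega(\alpha C)}$ holds uniformly over the whole range of $\mu'$, is the one place where the constants $1.3$ and $1.5$ and the hypotheses $p_i\le\alpha/2$ and $C\gg 1/\alpha$ must all be tracked at once.
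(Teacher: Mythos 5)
Your proposal is correct and follows essentially the same approach as the paper: Chernoff bounds (Lemma~\ref{wchernoff}) on $\ab{\x}$, $\ab{\q}$, and the two intersection sizes---whose means are separated as $\ge \alpha\sum_i p_i$ versus $\le (\alpha/2)\sum_i p_i$ via $p_i\le\alpha/2$---followed by a union bound over the uncorrelated vectors, with the constants $1.3$ and $1.5$ absorbing the concentration slack. Your extra care in the upper tail for uncorrelated pairs (splitting into a constant-deviation regime and a large-deviation regime, since $\sum_i p_i^2$ may lie far below the threshold) is a refinement of a detail the paper's proof glosses over by simply citing Chernoff, not a different route.
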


\begin{proof} 
    To begin, note that for any $\x'\sim \dis$, $\E[\ab{\x}] = C\log n$.  By Chernoff bounds (Lemma~\ref{wchernoff} for example) and the union bound, $\sqrt{1.3}(C\log n)\geq \min\{\ab{\x'},\ab{\q}\} \geq (C\log n)\sqrt{3}/2$ with high probability for all $\x'$ and $\q$.  

    First, consider the correlated pair: $\q\sim\dis_\alpha(\x)$.  We have $\E[\ab{\x\cap\q}] = \sum_i p_i^2(1 - \alpha) + p_i\alpha \geq \alpha C\log n$.  Again by Chernoff, $\ab{\x\cap\q} \geq \alpha (C\log n)/\sqrt{1.3}$ with high probability.  Combining this with the above, $B(\x,\q)\geq \alpha/1.3$ with high probability.  

    Now, consider an uncorrelated pair $\x'$ and $\q$ drawn independently from $\dis$.  We have $\E[\ab{x\cap\q}] = \sum_i p_i^2 \leq (\alpha C\log n)/2$.  Using Chernoff bounds, $\E[\ab{x\cap\q}] \leq (2/\sqrt{3})  (\alpha C\log n)/2$ with high probability.  Combining with the above, $B(\x,\q)\leq \alpha/1.5$ with high probability.  
\end{proof}

Now we show that our sampling probabilities are large enough to guarantee that the two $\alpha$-correlated vectors $\x$ and $\q$ are likely to share a path.   We need to use new techniques beyond those in Section~\ref{sec:adversarial} (and beyond previous results) because we our proof must leverage that the close pair is chosen randomly.  After all, if we ignore this aspect, we have reduced to the adversarial case---we want to improve those bounds.

We do this by showing that, with high probability, any given path during our recursive process satisfies the requirements of Lemma~\ref{lem:cor}.  Note that we prove correctness with probability at least $1 - 1/n^2$ for simplicity; we can obtain stronger bounds by slightly increasing $\delta$.

\begin{lemma}
    \label{lem:correlatedcorrectness}
    %Consider a set of at most $\log n$ indices $v\subset [d]$.  
    Consider a path $v$ of length at most $\log n$, and $q\sim \dis_\alpha(\x)$.  
    Assume $C\alpha \geq 15$.  
    %Let $\dis$ be a distribution satisfying $\sum_{i\in [d]} p_i \geq C\log n$, and 
    %$(\x,\q)$ be a pair of $\alpha$-correlated vectors: $\x\sim \dis$ and $\q\sim \dis_{\alpha}(\x)$.  
    %Then, with high probability,
    Then, with probability at least $1 - 1/n^2$, 
    \[
        \sum_{i\in (\x\cap \q)\setminus v} s({\x},\ab{v},i) \geq 1.
    \] 
\end{lemma}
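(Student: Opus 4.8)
The plan is to reduce the claim to a concentration bound for a sum of independent weighted indicators. Since $\ab{v}=j\le\log n$ and $\hat p_i\ge\alpha$, I would first discard the $-j$ in the denominator of the threshold, using $\st{\x}{j}{i}=\frac{1+3/\sqrt{\alpha C}}{\hat p_i C\log n-j}\ge\frac{1+3/\sqrt{\alpha C}}{\hat p_i C\log n}$. Writing $\delta=3/\sqrt{\alpha C}$, it then suffices to prove that $W:=\sum_{i\in(\x\cap\q)\setminus v}\frac{1}{\hat p_i}\ge\frac{C\log n}{1+\delta}$, since multiplying $W$ by $\frac{1+\delta}{C\log n}$ reproduces a lower bound on the left-hand side of the desired inequality.

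Next I would set up $W$ as a sum of independent random variables and apply the weighted Chernoff bound of Lemma~\ref{wchernoff}. Taking the joint randomness of $\x\sim\dis$ and $\q\sim\dis_\alpha(\x)$ (which matches the success-probability requirement for correlated queries), the terms $W_i=\frac{1}{\hat p_i}\mathbf{1}[\x_i=1\wedge\q_i=1]$ are independent across $i$, take values in $\{0,1/\hat p_i\}$, and satisfy $\Pr[\x_i=1\wedge\q_i=1]=p_i\hat p_i$, so $\E[W_i]=p_i$. Hence $\E[W]=\sum_{i\notin v}p_i\ge\sum_{i}p_i-\sum_{i\in v}p_i\ge C\log n-\tfrac{\alpha}{2}\log n=C\log n\,(1-\tfrac{\alpha}{2C})$, using $\sum_i p_i\ge C\log n$, $p_i\le\alpha/2$, and $j\le\log n$. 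Because every nonzero value of $W_i$ is at most $1/\alpha$, the maximal weight in Lemma~\ref{wchernoff} is $a=1/\alpha$, and the lower tail gives $\Pr[W\le(1-\varepsilon)\E[W]]\le\exp(-\varepsilon^2\E[W]\alpha/2)$.

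To finish I would choose the deviation $\varepsilon=\Theta(1/\sqrt{\alpha C})$: plugging $\E[W]\ge C\log n(1-\alpha/(2C))$ into the exponent makes it $\Theta(\varepsilon^2\alpha C\log n)$, so a suitable constant times $1/\sqrt{\alpha C}$ drives the failure probability below $1/n^2$. On the complementary event, $W\ge(1-\varepsilon)\E[W]\ge(1-\varepsilon)(1-\tfrac{\alpha}{2C})\,C\log n$, and the proof reduces to verifying the purely numerical inequality $(1+\delta)(1-\varepsilon)(1-\tfrac{\alpha}{2C})\ge1$ with $\delta=3/\sqrt{\alpha C}$. This is exactly where the constant $3$ and the hypothesis $C\alpha\ge15$ are used: all three corrections are governed by $1/\sqrt{\alpha C}$ (with the expectation deficit $\alpha/(2C)$ of lower order), and the boost $1+\delta$ is tuned so that $3/\sqrt{\alpha C}$ strictly dominates the Chernoff deviation $\varepsilon\approx 2/\sqrt{\alpha C}$ together with the quadratic cross terms once $C\alpha$ is large enough.

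I expect the delicate point to be precisely this constant bookkeeping rather than any conceptual step: the deviation needed for an $n^{-2}$ tail and the fixed multiplicative boost are of the same order $1/\sqrt{\alpha C}$, so one cannot afford slack and must track the factor-$2$ (Chernoff) versus factor-$3$ (boost) constants and the cross terms carefully. Two secondary points to get right are independence and conditioning --- working with the joint law of $(\x,\q)$ so that $\E[W]$ is the full $\sum_i p_i\approx C\log n$ rather than a pessimistic realized $\ab{\x}$ --- and confirming that dropping the $-j$ only helps, so the single bound is valid uniformly over the worst placement of the $j$ removed coordinates.
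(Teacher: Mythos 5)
Your overall strategy is the same as the paper's: compute the expectation of the threshold sum over $(\x\cap\q)\setminus v$ under the joint law of $(\x,\q)$, apply the weighted Chernoff bound of Lemma~\ref{wchernoff} with maximal weight governed by $\hat p_i\geq\alpha$, and tune $\delta=3/\sqrt{\alpha C}$ against the required deviation. However, your very first simplification --- dropping the $-j$ from the denominator of the threshold --- opens a genuine gap. The $-j$ in $\st{\x}{j}{i}=\frac{1+\delta}{\hat p_i C\log n-j}$ is there precisely to compensate for the coordinates already used up by the prefix $v$: keeping it, the paper gets
\[
\E\left[\sum_{i\in(\x\cap\q)\setminus v}\st{\x}{j}{i}\right]\;\geq\;(1+\delta)\,\frac{C\log n-\sum_{i\in v}p_i}{C\log n-j}\;\geq\;1+\delta,
\]
with \emph{no} loss, since $\sum_{i\in v}p_i\leq j$. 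By discarding the $-j$ you keep the full denominator $C\log n$ but still lose $\sum_{i\in v}p_i$ from the numerator, so you eat an extra factor $\bigl(1-\tfrac{\alpha}{2C}\bigr)$. Your claim that this deficit is ``of lower order'' is exactly what fails: it is $O(1/(\alpha C))$, but with a coefficient that is comparable to your entire slack when $\alpha C$ is near $15$.

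Concretely, take $\alpha=1$, $C=15$ (allowed: $0<\alpha\leq1$ and $C\alpha\geq15$; then $\hat p_i=1$ and $a=1/\alpha=1$). Worst case $\E[W]=(C-\alpha/2)\log n=14.5\log n$, while you need $W\geq\frac{C\log n}{1+\delta}=\frac{15\log n}{1.775}\approx 8.45\log n$. A failure probability of $n^{-2}$ forces $\varepsilon\geq 2\sqrt{\ln 2/14.5}\approx 0.437$, so your chain only guarantees $W\geq(1-\varepsilon)\,14.5\log n\approx 8.16\log n$; equivalently $(1+\delta)(1-\varepsilon)\bigl(1-\tfrac{\alpha}{2C}\bigr)\approx 0.965<1$, and the best failure bound your argument yields is roughly $n^{-1.8}$, not $n^{-2}$. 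The numerical inequality you defer to only starts holding around $\alpha C\gtrsim 19$, so the stated range $C\alpha\geq 15$ is not covered. The fix is simply not to drop the $-j$: run your same Chernoff argument directly on the sum $\sum_i \st{\x}{j}{i}\,\mathbf{1}[\x_i=\q_i=1]$, whose mean is $\geq 1+\delta$ by the compensation above and whose maximal term is at most $\frac{1+\delta}{\alpha C\log n-j}$; the required relative deviation is then $\delta/(1+\delta)$, and $\delta=3/\sqrt{\alpha C}$ passes (barely) at $\alpha C=15$, which is how the paper's proof goes. A secondary point: Lemma~\ref{wchernoff} as stated requires weights $a_i\in[0,1]$, while your $W_i$ take values $1/\hat p_i\leq 1/\alpha$, which may exceed $1$; you should rescale by $\alpha$ before invoking it (harmless, but it should be said).
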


\begin{proof} 
    We begin by calculating the expected value.
    \begin{align*}
        \E\left[ \sum_{i\in \filterint} s({\x},\ab{v},i)\right] &= \sum_{i\in[d]\setminus v} \Pr(i\in\x\cap\q) \frac{1+\delta}{\hat{p_i}C(\log n) - j}\\
        & \geq  \sum_{i\in[d]\setminus v}  p_i\frac{(1+\delta)}{C(\log n) - j/\hat{p_i}} \\
        & \geq \frac{(1+\delta)(C\log n - \sum_{i\in v} p_i)}{C(\log n) - j}\\
        & \geq 1 +\delta.
\end{align*}
We have $\hat{p_i}\geq \alpha$, so 
$s(\x,\ab{v},i) \leq (1 + \delta)/(\alpha C\log n)$.  Then by Lemma~\ref{wchernoff},
    \[
    \Pr\left(\sum_{x\in \filterint} s(\x,\ab{v},i) \leq 1\right) \leq \exp\left(-\frac{\alpha C\log n}{2} \left(\frac{\delta}{1+\delta}\right)^2 \right).
\]
Assume that $C\alpha \geq 15$.  Then we assign 
\[
    \delta = \frac{3}{\sqrt{\alpha C}} \geq \frac{2\sqrt{\ln 2/(\alpha C)} }{1 - 2\sqrt{\ln 2/(\alpha C)}}.
\]
Substituting, we obtain the lemma.  
\end{proof}

Applying union bound, all paths in our recursive process satisfy Lemma~\ref{lem:cor}.  
Thus, these two lemmas imply that we return the $\alpha$-correlated $\x$ with high probability.

\subsection{Performance Guarantees}
We now compute the expected number of paths chosen by an $\x\sim \dis$.  Since we have both $\q\sim\dis$ and $\x\sim\dis$ for all $\x\in S$, this lemma implies the query time and space bounds of Theorem~\ref{thm:alpha-correlated} immediately.

\begin{lemma}
    \label{lem:correlatedspeed}
For every $\varepsilon>0$ there exists a constant $C$ such that if $\sum_{i\in[d]}\dfrac{p_i^{1+\rho}}{\hat{p}_i}=\sum_{i\in [d]}p_i$ and $\sum_{i\in[d]}p_i = C\log n$, then $\E[\ab{F(\x)}]=O(n^{\rho+\varepsilon})$ for $\x\sim\dis[p_1,\ldots , p_d]$.
\end{lemma}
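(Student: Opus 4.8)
The plan is to reduce to Lemma~\ref{lem:per}, which bounds $\E_H[\ab{F(\x)}]$ for a \emph{fixed} $\x$ by $O(n^\rho c^{\log n})$ whenever $\sum_{i\in\x} p_i^\rho \st{\x}{j}{i}\le c$. The essential observation is that, with the correlated-case choice $\st{\x}{j}{i}=(1+\delta)/(\hat p_i C\log n-j)$ where $\delta=3/\sqrt{\alpha C}$, the threshold depends on the distribution and on $C\log n=\sum_i p_i=\E[\ab{\x}]$ but \emph{not} on the random realization of $\x$. Hence the only randomness in $\sum_{i\in\x} p_i^\rho\,\st{\x}{j}{i}$ is which bits of $\x$ are set, and I can compute its expectation over $\x\sim\dis$ directly. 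Since the threshold is increasing in $j$ and paths have length at most $\log_2 n$ (because $\prod_k p_{i_k}\le 2^{-\ell}\le 1/n$ forces $\ell\le\log_2 n$), it suffices to control the sum at $j=\log n$, which then certifies the bound for all smaller $j$.

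First I would compute the expectation at $j=\log n$. Using $\Pr[i\in\x]=p_i$,
\[
\E_\x\!\Big[\sum_{i\in\x} p_i^\rho\,\st{\x}{\log n}{i}\Big]=\sum_{i\in[d]} p_i^{1+\rho}\cdot\frac{1+\delta}{\hat p_i C\log n-\log n}.
\]
Since $\hat p_i\ge\alpha$ and $C\gg 1/\alpha$, the $-\log n$ in the denominator contributes only a factor $1+O(1/(\alpha C))$, so the right-hand side is at most $\tfrac{(1+\delta)(1+O(1/(\alpha C)))}{C\log n}\sum_i p_i^{1+\rho}/\hat p_i$. The hypothesis $\sum_i p_i^{1+\rho}/\hat p_i=\sum_i p_i=C\log n$ now collapses this to $(1+\delta)(1+O(1/(\alpha C)))=1+O(1/\sqrt{\alpha C})$. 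This is exactly the point at which skew enters: the deterministic threshold is engineered so that the same $\rho$ solving the hypothesis governs the expected number of filters.

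Next I would establish concentration. The sum at $j=\log n$ is a sum of independent nonnegative terms, one per set bit, each at most $\st{\x}{\log n}{i}\le(1+\delta)/((\alpha C-1)\log n)=O(1/(\alpha C\log n))$ (using $p_i\le 1/2$ and $\rho\ge0$). Applying the weighted Chernoff bound of Lemma~\ref{wchernoff} with this per-term bound and expected value $\approx1$, I get for any slack $\gamma$ that
\[
\Pr\!\Big[\sum_{i\in\x} p_i^\rho\,\st{\x}{\log n}{i}\ge(1+\gamma)\big(1+O(1/\sqrt{\alpha C})\big)\Big]\le n^{-\Omega(\gamma^2\alpha C)}.
\]
Choosing $\gamma=\varepsilon/2$ and $C$ large enough that $O(1/\sqrt{\alpha C})<\varepsilon/4$, the resulting constant $c:=(1+\gamma)(1+O(1/\sqrt{\alpha C}))$ satisfies $\log_2 c<\varepsilon$, so $c^{\log n}<n^\varepsilon$. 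By Lemma~\ref{lem:per}, every such \emph{typical} $\x$ then satisfies $\E_H[\ab{F(\x)}]=O(n^\rho c^{\log n})=O(n^{\rho+\varepsilon})$, and multiplying by $\Pr[\text{typical}]\le1$ bounds its contribution to $\E_{\x,H}[\ab{F(\x)}]$.

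Finally I would account for the atypical $\x$. Both the event that the weighted sum exceeds $c$ and the event $\ab{\x}>2C\log n$ occur with probability $n^{-\Omega(C)}$ by the same Chernoff reasoning. For such $\x$ I can still bound $\E_H[\ab{F(\x)}]$ crudely by invoking Lemma~\ref{lem:per} with $\rho=0$ and $\ab{\x}\le 2C\log n$, which yields a fixed polynomial bound $n^{O(\log(1/\alpha))}$; the contribution of genuinely heavy vectors is then controlled by the rapid tail decay of $\ab{\x}$ around its mean $C\log n$. Taking $C$ large enough that $n^{-\Omega(C)}$ dominates this polynomial makes the atypical contribution $o(n^\rho)$, giving $\E_{\x,H}[\ab{F(\x)}]=O(n^{\rho+\varepsilon})$. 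The main obstacle is precisely this last step: unlike the adversarial analysis, where $\ab{\x}$ is fixed, here one must average over the random $\x$, and the fiddly part is bookkeeping the rare vectors whose weighted threshold sum or Hamming weight is atypically large, ensuring their blow-up in $\ab{F(\x)}$ is swamped by the $n^{-\Omega(C)}$ failure probability.
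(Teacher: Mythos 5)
Your proposal is correct and follows essentially the same route as the paper's proof: bound the threshold $\st{\x}{j}{i}$ uniformly over $j\leq \log n$ using $\hat{p}_i\geq \alpha$, apply the weighted Chernoff bound (Lemma~\ref{wchernoff}) to show that $\sum_{i\in\x}p_i^\rho \st{\x}{j}{i}\leq c$ with $c^{\log n}\leq n^{\varepsilon}$ except with probability $n^{-\Omega(C^{1/3})}$, and then invoke Lemma~\ref{lem:per}. The only notable difference is that your handling of the atypical vectors (splitting into moderate-weight bad vectors, bounded via Lemma~\ref{lem:per} with $\rho=0$, and heavy vectors, bounded via the exponential tail of $\ab{\x}$) is more explicit than the paper's one-line remark that $\E_{\mathcal{H}}[\ab{F(\x)}]$ is polynomial in $n$ for any $\x$, a claim which in fact requires precisely your tail argument when $\ab{\x}$ can greatly exceed $C\log n$.
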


\begin{proof}
We want to show that if $C$ is sufficiently large, then $\E_{\mathcal{H},\x\sim\dis}[\ab{F(\x)}]=O(n^{\rho+\varepsilon})$ where the expectation is over both the randomness $\mathcal{H}$ of the data structure and the randomness of $\x$.
Since $\hat{p}_i\geq \alpha$ and $j\leq \log n$, we have
\begingroup
\interdisplaylinepenalty=10000
\begin{align*}
\st{\x}{j}{i}&=\frac{1}{\hat{p}_i C\log n-j}\\
    &\leq\frac{1}{\hat{p}_i C\log n-\log n}\\
    &\leq \frac{1}{\hat{p}_i C\log n}\cdot \frac{1}{1-\frac{\log n}{\hat{p}_i C\log n}}\\
&\leq \frac{1}{\hat{p}_i C\log n}\frac{1}{\alpha C}.
\end{align*}
Furthermore, by Lemma~\ref{wchernoff}, 
\begin{align*}
\Pr\left[\sum_{i\in\x}p_i^\rho / \hat{p}_i\leq (1+C^{-1/3})\sum_{i\in[d]} p_i^{1+\rho} / \hat{p}_i\right]
&\geq 1-\exp\left(-\alpha\frac{(C^{-1/3})^2}{3}\sum_{i\in [d]}p_i^{1+\rho} / \hat{p}_i\right)\\
& \geq 1-\exp\left(-\alpha\frac{C^{-2/3}}{3}C\log (n)\right)\\
& \geq 1-n^{\alpha\frac{b_1C^{1/3}}{3}}.
\end{align*}
\endgroup
Thus, with probability $1-n^{\Omega(C^{1/3})}$, we have 
\begin{align*}
\sum_{i\in\x}p_i^\rho \st{\x}{j}{i}&\leq \sum_{i\in\x}p_i^\rho \frac{1}{\hat{p}_i C\log n}\frac{1}{\alpha C}\\
&\leq \frac{1+C^{-1/3}}{C\log (n)\alpha C}\sum_{i\in [d]} p_{i}^{1+\rho} / \hat{p}_i\\
&=\frac{1+C^{-1/3}}{\alpha C}.
\end{align*}
Lemma~\ref{lem:per} yields $\Pr_{\x\sim \dis}[\E_{\mathcal{H}}[F(\x)]\leq n^{\rho+\varepsilon}]\geq 1-n^{\Omega(C^{1/3})}$. Since $\E_\mathcal{H}[\ab{F(\x)}]$ is polynomial in $n$ for any $\x$, this implies that $\E_{\mathcal{H},\x\sim \dis}[\ab{F(\x)}]=O(n^{\rho+\varepsilon})$.
\end{proof}

\begin{comment}
\paragraph*{Dimension Reduction}

Dimension reduction is a widely used technique to improve set similarity search performance.  The basic idea of dimension reduction is to reduce the number of bits in the data structure, which can significantly improve performance.

Usually, theoretical data structures reduce dimension for set similarity by choosing a small number of representative dimensions {at random}.  However, in practice, these dimensions may be chosen heuristically to improve data structure performance.  (Indeed, prefix filtering~\cite{Mann2016,Chaudhuri_ICDE06} can be viewed as a lossless form of dimension reduction.)

Here, the $\alpha$-correlated vector $\x$ is guaranteed to be $\alpha$-correlated with $\q$ across all dimensions.  Therefore, we can aggressively reduce dimension, only keeping the most useful bits, while being sure not to introduce too much noise.  

In particular, one can obtain improved performance by only considering bits with small $p_i$.  In particular, let $d'$ satisfy $\sum_{j \in [d']} p_j \gg (\log n)/\alpha$.  Then we can consider a distribution $\dis'$ which is $\dis$ restricted to the $d'$ bits with smallest $p_i$. 

After the dimension reduction, the above bounds still hold, with $\dis'$ instead of $\dis$.  For some distributions, this can lead to a significantly better value of $\rho$.
\end{comment}

%%% Local Variables:
%%% TeX-master: "main.tex"
%%% End:
\section{Performance Comparison}
\label{sec:examples}

The bounds achieved by our data structures are given as the solution to an equation which is not in closed form.  In this section, we give some examples and intuition of how much speedup we achieve.

\subsection{Adversarial query}
If the query is adversarial, then the query time for a query $\q$ is determined by the smallest $\rho$ such that 
\begin{equation*}
\sum_{i\in\q}p_i^\rho\leq b_1\ab{\q}.
\label{ourrho}
\end{equation*}
We will now discuss to what extent our data structure is able to take advantage of possible skew in the item-level probabilities. In order to simplify the discussion, assume that $\ab{\q}=\sum_{i\in[d]}p_i$ (so that $\ab{\q}$ equals the expected value of $\ab{\x}$ when $\x\sim\dis$, meaning that all sets have roughly the same size). Now, for every $\varepsilon>0$ there exists a $C$ such that if $\sum_{i\in [d]}p_i\geq C\log n$, then $B(\x,\q)=\frac{1}{\ab{\q}}\sum_{i\in\q}p_i\pm\varepsilon$ for every $\x\in S$ with high probability. Thus, we can solve the problem by solving the $(b_1,b_2)$-approximate Braun-Blanquet similarity search problem using the standard Chosen Path data structure, thereby obtaining a $\rho$-value which, for sufficiently large $C$, is arbitrarily close to
\begin{equation*}
    \rho_{\texttt{CP}}=\frac{\log(b_1)}{\log \left(\frac{1}{\ab{\q}} \sum_{i\in\q}p_i \right)}.
\end{equation*}
If the part of the input distribution relevant for the query $\q$ contains no skew, i.e., if $p_i=p$ for all $i\in\q$, then $\rho_{\texttt{CP}}=\rho$. However, $\rho<\rho_{\texttt{CP}}=\log(b_1) / \log (p)$ in all other cases. 

To our knowledge there is no closed-form expression for the solution $\rho$ to the equation $\sum_{i\in\q}p_i^\rho\leq b_1\ab{\q}$. Instead, we will compute $\rho$ in a few settings to illustrate how skew influences our query time. Suppose that $\q$ consists of two types of bits: half the bits of $\q$ are set with probability $p_a$ in a random set from $S$, and the other half is set with probability $p_b$. Furthermore, assume that $\sum_{i\in[d]}p_i=\ab{\q}=\Theta(\log n)$.

Suppose first that the distribution has very significant skew, e.g., $p_a=1/4$, $p_b= n^{-0.9}$. Assume that we are searching for a set in $S$ with similarity at least, say, $b_1=1/3$. In this case, $\rho_\texttt{CP}\geq \frac{\log (1/3)}{\log(1/8)}\geq 0.528$, whereas we obtain a $\rho$-value of $\rho=\frac{\log(2/3)}{\log(1/4)}+o_n(1)\leq 0.293$ for $n$ sufficiently large. Prefix filtering does not give any non-trivial (worst-case) performance guarantee. This example shows that we can take advantage of the skew even when it is possible that the entire intersection between $\q$ and the close point $\x$ is within a part of $\q$ where all bits are set with the same probability in a datapoint. A more extreme example occurs if we increase $b_1$ to $b_1=2/3$ so that a possible close point $\x$ must share some of the bits for which the associated probability is $p_b=n^{-0.9}$. In this case, equation (\ref{ourrho}) is satisfied for $\rho$ arbitrarily close to zero. Thus, we achieve a very fast query time (in particular, our query time will be $O(n^{\varepsilon})$ for every constant $\varepsilon>0$). In order to understand why, note that no path in $F(\q)$ can contain more than two bits with associated item-level probability $n^{-0.9}$. Furthermore, since $b_1>1/2$, our sampling threshold $s(\q,j,i)$ is low enough that we are very unlikely to create long paths consisting only of bits with associated probability $1/4$. Indeed, the expected size of a path in $F(\q)$ is $O(1)$ in this case. Solving the problem by solving the $(b_1,b_2)$-approximate problem will yield a $\rho$ value of $\rho_{\texttt{CP}}=\log(2/3)/\log (1/8)=0.194$. Prefix filtering will need $\Omega(n^{0.1})$ time to locate a possible close point if it exists.

\subsection{Correlated query}
When the query is $\alpha$-correlated with a point $\x$ in our dataset $S$, the running time is (for $\sum_{i\in[d]}p_i$ sufficiently large) determined by the smallest $\rho$ such that $\sum_{i\in[d]} p_i^{1+\rho}/{\hat{p}_i}\leq \sum_{i\in [d]}p_i$ where $\hat{p}_i=(1-\alpha)p_i+\alpha$.

As for the adversarial case, whenever there is significant skew in the input distribution, we can get very fast query time. Suppose for example that $\dis[p_1,\ldots ,p_d]$ is such that $4 C\log (n)$ bits are set to $1$ with probability $p_a=1/4$ and $n^{9/10}C\log (n)$ bits are set to $1$ with probability $p_b=n^{-9/10}$. If $\alpha=2/3$, then we find that our expected query time is $O(n^\varepsilon)$ for every constant $\varepsilon>0$. On the other hand, prefix filtering takes $\Omega(n^{0.1})$ time.

Let us now consider examples where the input distribution is skewed but all probabilities are $\Omega(1)$, meaning in particular that prefix filtering does not give any non-trivial worst-case guarantees. For example, suppose that $C\log n$ bits are set to $1$ with probability $p_a=\Theta(1)$ and $C\log n$ bits are set to $1$ with probability $p_b=\Theta(1)$. We compare the performance of our data structure to that of using Chosen Path for solving the $(b_1,b_2)$-approximate problem where $b_1$ is the expected similarity between the correlated points and $b_2$ is the expected similarity between the query and an uncorrelated point. For $p_a=p$ and $p_b=p/8$ and $\alpha=2/3$, we plot the corresponding $\rho$-values in Figure~\ref{figure}.

%!TEX root = main.tex

\section{Skew in Real Data Sets}
\label{sec:experiments}

\begin{figure*}[t]
\centering
\includegraphics[width=.49\textwidth]{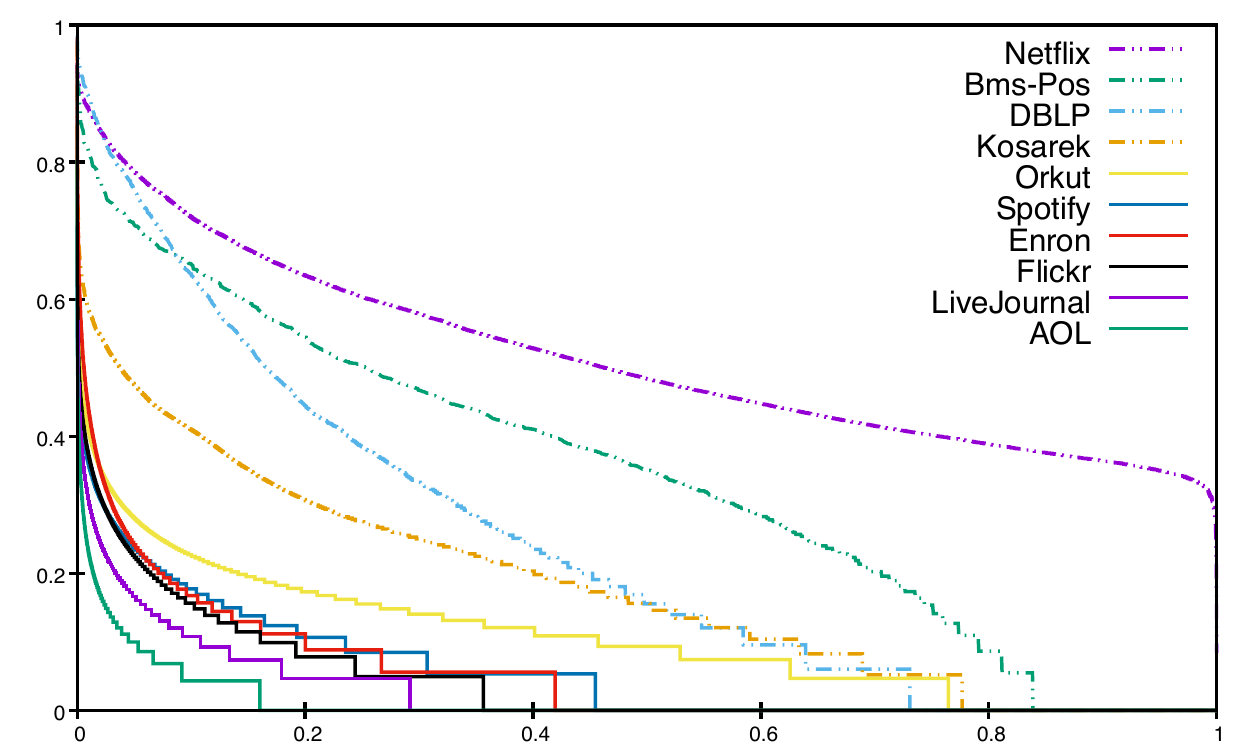}
\includegraphics[width=.49\textwidth]{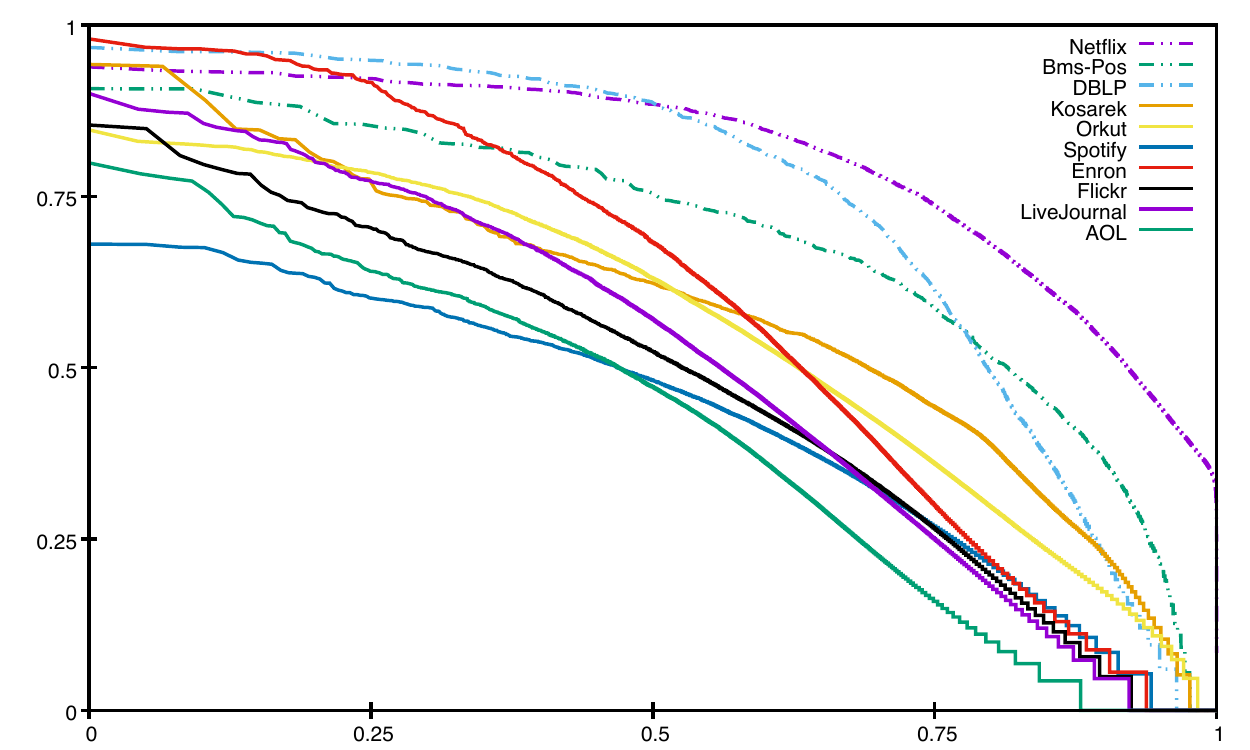}
\caption{Frequency distributions of real datasets from the set similarity search benchmark of Mann et al.~\cite{Mann2016}, with frequencies $p_j$ in decreasing order, plotted in two ways.
In both figures, for each $j\in [d]$, the y-axis denotes $1+\log_n p_j$.  
On the left, the x-axis denotes $j/d$; on the right, the x-axis denotes $\log_d j$.}
\label{fig:zipf}
\end{figure*}

In this section we examine how the ideas that inspired our model apply to the real-life sparse data sets used to evaluate similarity set methods by Mann et al.~\cite{Mann2016}.

\medskip

\paragraph{Skew}
For each data set we computed the empirical frequencies $p_j$, indexed such that frequencies decrease as $j$ grows.
In Figure~\ref{fig:zipf} we show the distribution of frequencies $p_j$ in two ways.
On the left side we plot $\log_n(p_j n)$ against $j/d$, where $n$ is the number of vectors and $d$ is the dimension of the vectors.
On the right side we show the distribution on a normalized log-log scale.
As can be seen, all data sets display a significant skew. 
A plain Zipfian distribution would appear linear on this plot (with slope corresponding to the exponent in the distribution).
Frequencies do not follow a Zipfian distribution, but are generally close to a ``piecewise Zipfian'' distribution.
For almost all data sets, the frequencies outside of the very top can be thought of as being bounded by $p_j \leq n^{-\gamma}$ for some constant $\gamma > 0$.

%%single column version
%\begin{figure}[t]
%\centering
%\includegraphics[width=.4\textwidth]{distributionplot4.pdf}
%\\
%\includegraphics[width=.4\textwidth]{distributionplotloglog.pdf}
%\caption{Frequency distributions of real datasets from the set similarity search benchmark of Mann et al.~\cite{Mann2016}, with frequencies $p_j$ in decreasing order, plotted in two ways.
%In both figures, for each $j\in [d]$, the y-axis denotes $1+\log_n p_j$.  
%On the top, the x-axis denotes $j/d$; on the bottom, the x-axis denotes $\log_d j$.}
%\label{fig:zipf}
%\end{figure}

\paragraph{Approximate Independence}
Our theoretical analysis is based on the assumption of independence between the set bits (i.e. the items) of the random vectors, and more specifically on the inequality:
\begin{equation}\label{eq:indep}
	\Pr_{\x\in S}[\forall_{j\in I} \x_j=1] \leq \prod_{j\in I} p_j, \text{ for } I\subseteq [d] \enspace .
\end{equation}
The asymptotic analysis still applies if (\ref{eq:indep}) holds only up to a constant factor.
(Under some conditions we may even let this factor increase exponentially with $|I|$.)
To shed light on whether this assumption is realistic we considered random size-2 and size-3 subsets $I$, respectively, and computed the expected number of vectors $\x$ satisfying $\forall_{j\in I} \x_j=1$ under the independence assumption and in the data set, respectively.
The ratio of these numbers is the constant factor needed, on average over all sets $I$, to make (\ref{eq:indep}) hold. 

\begin{wraptable}{r}{.4\textwidth}
%\begin{table}[ht]
	\centering
	%\scriptsize
	\begin{tabular}{lrr} \hline
		Data set & $|I|=2$ & $|I|=3$ \\
		\hline
		AOL        & $1.2$ &  $3.9$ \\
		BMS-POS    & $1.5$ &  $3.9$ \\
		DBLP       & $1.4$ &  $2.3$ \\
		ENRON      & $2.9$ & $21.8$ \\
		FLICKR     & $1.7$ &  $4.9$ \\
		KOSARAK    & $7.1$ &$269.4$ \\
		LIVEJOURNAL& $2.3$ &  $7.3$ \\
		NETFLIX    & $3.1$ & $24.0$ \\
		ORKUT      & $4.0$ & $37.9$ \\
		SPOTIFY    & $24.7$& $6022.1$ \\
		\hline
	\end{tabular}
	\caption{We computed the ratio bewteen $\E_I[\Pr_{\x\in S}[\forall_{j\in I} \x_j=1]]$ (expected observed number of vectors with 1s in $I$) and $\E_I[\prod_{j\in I} p_j]$ (expected number of vectors with 1s in $I$ assuming independence). The expectations are computed for $I$ chosen uniformly from subsets of $[d]$ of size 2 and 3, respectively.}
	\label{tab:independence}
        \vspace{-.5in}
%\end{table}
  \end{wraptable}

The ratios are shown in Table~\ref{tab:independence}. If the independence assumption were true, the ratios would be close to 1. As can be seen all data sets have some kind of positive correlation between dimensions meaning that there are more pairs/triples than the independence model predicts. 
However, for most of the data sets it seems reasonable to assume that these correlations are weak enough that (\ref{eq:indep}) holds up to a constant factor independent of $n$, at least for small constant~$|I|$.
In those cases we expect our theoretical analysis based on independence to be indicative of actual running time.

\section{Conclusion}
Several open problems remain about how to handle nearest neighbor search in skewed datasets.

One natural question is how to relax the assumption that all item probabilities $p_i$ are known to the algorithm beforehand.  It seems likely that one can estimate each $p_i$ to very high precision by counting the occurrences in the dataset itself, leading to the same asymptotic bounds.  We note that the set similarity join algorithm in~\cite{ChristianiPaSi17} avoids using or estimating $p_i$, but we do not know how to analyze its performance in our setting.

The performance examples in this paper deal with distributions with two types of item: very rare and very common items.  In practice, one more often encounters distributions with much more gradual skew, such as a Zipf distribution.  Unfortunately, sets selected using a Zipf distribution have very small expected size, which trivializes the asymptotics.  It would be interesting to find a class of distributions that accurately characterizes the skew of real data while remaining interesting for asymptotic analysis.  Such a distribution would be an important use case for our algorithm.

Finally, it would be interesting to relax the independence assumption.  Our experiments gave some evidence that many of the datasets we studied have only mild dependencies, but for some this was not the case (particularly the Spotify dataset).  In fact, the Spotify dataset has recently been observed to be a difficult case for a variant of the Chosen Path algorithm~\cite{ChristianiPaSi17}, possibly due to these correlations.  It seems that if the correlations are ``simple'' and known ahead of time, there may be strategies to deal with them when sampling paths.  Such an algorithm would loosen the independence assumption in our model, and has the potential to lead to much stronger performance in practice.

\section{Omitted Proofs}
\label{sec:omittedproofs}

\begin{proof}[Proof of Lemma~\ref{lem:cor}]
In order to simplify calculations, we are going to consider a conveniently chosen subset of $F(\x)\cap F(\q)$. In order to define this subset, for $v\in F_j(\x)\cap F_j(\q)$ and $i\in (\x\cap\q)\setminus v$ let, $$s(v\circ i)=\frac{\min\{\st{\x}{j}{i}, \st{\q}{j}{i}\}}{\sum_{i'\in (\x\cap\q)\setminus v} \min\{\st{\x}{j}{i'}, \st{\q}{j}{i'}\}}.$$
We define $M_j$ and $M$ exactly as $F_j(\x)\cap F_j(\q)$ and $F(\x)\cap F(\q)$, except that we replace the requirements $h_j(v\concat i)\leq \st{\x}{j}{i}$ and $h_j(v\concat i)\leq  \st{\q}{j}{i}$ with the requirement $h_j(v\concat i)\leq s(v\concat i)$. It follows from the assumption (\ref{eq:branch}) that $M$ is indeed a subset of $F(\x)\cap F(\q)$. We claim that $\Pr[M\neq \emptyset]\geq 1/\log n$. In order to show this, we will make use of the following second moment bound:
\begin{equation}
\Pr [M_j\neq\emptyset]\geq\frac{\E[\ab{M_j}]^2}{\E[\ab{M_j}^2]}.
\label{eq:secondmoment}
\end{equation}
From (\ref{eq:secondmoment}), it suffices to show $\E[\ab{ M_j}^2]\leq j+1$ and $\E[\ab{M_j}]=1$ for all $j\geq 0$. We will prove that this holds by induction. Recall that we start with a single empty path, and so $\ab{M_0}=1$ with probability $1$. Thus, the statement trivially holds for $j=0$. Let $j>0$. 
Define the random variable $Y(v\concat i)=\mathbf{1}_{h_{j}(v\concat i)\leq s(v\concat i)}$. Note that this random variable is independent of $M_{j-1}$. In particular, for every possible $m_{j-1}$, we have $\E[Y(v \concat i)\vert M_{j-1}=m_{j-1}]=\E[Y(v\concat i)]=s(v\concat i)$. Using this independence along with (\ref{eq:branch}) yields 
\begin{align*}
\E[M_j]&=\E\left[\sum_{v\in M_{j-1}}\sum_{i\in (\x\cap\q)\setminus v}Y(v\concat i)\right]\\
&=\E\left[\sum_{v\in M_{j-1}}\sum_{i\in (\x\cap\q)\setminus v}\E[Y(v\concat i)]\right]\\
&=\E\left[\sum_{v\in M_{j-1}}\sum_{i\in (\x\cap\q)\setminus v}s(v\concat i)\right]\\
&\geq \E\left[\sum_{v\in M_{j-1}} 1\right]=\E[\ab{M_{j-1}}].
\end{align*}

    Recall that the hash-function $h_j$ was chosen from a pairwise independent family. Thus, for $v\concat i\neq v'\concat i'$, we have $\E[Y(v\concat i) Y(v'\concat i')] =\E[Y(v\concat i)]\E[Y(v'\concat i')]$. Then,
    \begin{align*}
        \E[\ab{ M_{j}}^2] ={}& \E\left[\left(\sum_{v\in M_{j-1}}  \sum_{i\in (\x\cap\q)\setminus v}Y(v\concat i)\right)^2\right]\\
        ={}& \E\left[\sum_{v\in M_{j-1}, i\in (\x\cap\q)\setminus v}Y(v\concat i)^2\right] 
        + \E\left[\sum_{\substack{v\in M_{j-1}, i\in (\x\cap\q)\setminus v\\u\in M_{j-1}, i'\in (\x\cap\q)\setminus v' \\v\concat i\neq v'\concat i'}} Y(v\concat i) Y(v'\concat i')\right]\\
        ={}& \E\left[\sum_{v\in M_{j-1}, i\in (\x\cap\q)\setminus v}\E[Y(v\concat i)^2]\right] 
        + \E\left[\sum_{\substack{v,v'\in M_{j-1}\\ i,i'\in (\x\cap\q)\setminus v\\v\concat i\neq v'\concat i'}} \E[Y(v\concat i)]\E[Y(v'\concat i')]\right].\\
\end{align*}

    We have $\E[Y(v\concat i)] = s(v\concat i)$.
    Furthermore, $\sum_{i\in (\x\cap\q)\setminus v} s(v\concat i) \leq 1$ (the same applies with $v'$, $i'$ substituted for $v$, $i$ respectively).  Substituting, 
\begin{align*}
        \E[\ab{ M_{j}}^2] ={}& \E\left[\sum_{v\in M_{j-1}, i\in (\x\cap\q)\setminus v}s(v\concat i)\right] 
        + \E\left[\sum_{\substack{v\in M_{j-1}, i\in (\x\cap\q)\setminus v\\u\in M_{j-1}, i'\in (\x\cap\q)\setminus v' \\v\concat i\neq v'\concat i'}} s(v\concat i) s(v'\concat i')\right]\\
        \leq{}& \E\left[\sum_{v\in M_{j-1}} 1\right] + \E\left[ \sum_{v\in M_{j-1}} \sum_{v'\in M_{j-1}}1\right]\\
        ={}& \E[\ab{M_{j-1}}]+\E[\ab{M_{j-1}}^2]\\
        \leq{}& 1+j.\qedhere
    \end{align*}
\end{proof}

\bigskip

\begin{proof}[Proof of Lemma~\ref{lem:preprocess}] 
Let $\x\sim\dis[p_1,\ldots , p_d]$ and assume that $\sum_{i\in[d]}p_i\geq C\log n$ for some $C>1$. We want to show that if $C$ is sufficiently large, then $\E_{\mathcal{H},\x\sim\dis}[\ab{F(\x)}]=O(n^{\rho+\varepsilon})$ where the expectation is over both the randomness $\mathcal{H}$ of the data structure and the randomness of $\x$. By a Chernoff bound, 
\begin{align*}
    \Pr\left[\ab{\x}\geq (1-C^{-1/3})\sum_{i\in[d]}p_i\right]
&\geq 1-\exp\left(-\frac{(C^{-1/3})^2}{2}\sum_{i\in[d]}p_i\right)&\\
&\geq 1-\exp\left(-\frac{C^{-2/3}}{2}C\log (n)\right)\\
&\geq 1-\exp(-C^{1/3}\log(n)/2)\\
&\geq 1-n^{\frac{C^{1/3}}{2}}.
\end{align*}
Furthermore,
\begin{align*}
    \Pr\left[\sum_{i\in\x}p_i^\rho\leq (1+C^{-1/3})\sum_{i\in[d]} p_i^{1+\rho}\right]&
    \geq 1-\exp\left(-\frac{(C^{-1/3})^2}{3}\sum_{i\in [d]}p_i^{1+\rho}\right)\\
    & = 1-\exp\left(-\frac{C^{-2/3}}{3}b_1\sum_{i\in [d]}p_i^{\rho}\right)\\
    & \geq 1-\exp\left(-\frac{C^{-2/3}}{3}b_1C\log (n)\right)\\
    & \geq 1-n^{\frac{b_1C^{1/3}}{3}}.
\end{align*}
 When both of these events occur, we have
\begin{align*}
\st{\x}{j}{i}&=\frac{1}{b_1\ab{\x}-j}\\
    &\leq\frac{1}{b_1\ab{\x}-\log n}\\
    &\leq \frac{1}{b_1(1-C^{-1/3})\sum_{i\in [d]}p_i-\log n}\\
&= \frac{1}{b_1\sum_{i\in [d]}p_i}\frac{1}{1-C^{-1/3} -\log n/(b_1\sum_{i\in [d]}p_i)}\\
&\leq\frac{1}{b_1\sum_{i\in [d]}p_i}\frac{1}{1-C^{-1/3}-1/(b_1C)}.
\end{align*}
And therefore
\begin{align*}
\sum_{i\in\x}p_i^\rho \st{\x}{j}{i}&\leq \sum_{i\in\x}p_i^\rho \frac{1}{b_1\sum_{i\in [d]}p_i}\frac{1}{1-C^{-1/3}-1/(b_1C)}\\
&\leq \sum_{i\in[d]}p_i^{1+\rho} \frac{1}{b_1\sum_{i\in [d]}p_i}\frac{1+C^{-1/3}}{1-C^{-1/3}-1/(b_1C)}\\
&\leq \frac{1+C^{-1/3}}{1-C^{-1/3}-1/(b_1C)}.
\end{align*}
Thus, according to Lemma~\ref{lem:per}, $\Pr_{\x\sim \dis}[\E_{\mathcal{H}}[F(\x)]\leq n^{\rho+\varepsilon}]\geq 1-n^{\Omega(C^{1/3})}$. Since $\E_\mathcal{H}[\ab{F(\x)}]$ is polynomial in $n$ for any $\x$, this implies that $\E_{\mathcal{H},\x\sim \dis}[\ab{F(\x)}]=O(n^{\rho+\varepsilon})$.
\end{proof}

\bibliographystyle{abbrv}
\bibliography{skewnn}

\end{document}